% VLDB template version of 2020-08-03 enhances the ACM template, version 1.7.0:
% https://www.acm.org/publications/proceedings-template
% The ACM Latex guide provides further information about the ACM template

\documentclass{article}

\usepackage{arxiv}

\usepackage{cite}
% custom packages
\usepackage{algorithm}
\usepackage{algorithmic}
\usepackage{mdframed}
\usepackage{booktabs}

\usepackage{amsmath,amsthm,amssymb}
\usepackage{bbold}
\usepackage{pgfplots}
\pgfplotsset{compat=1.18}
\newtheorem{definition}{Definition}
\newtheorem{theorem}{Theorem}
\newtheorem{proposition}{Proposition}
\newtheorem{lemma}{Lemma}

\usepackage{booktabs}   % For professional table lines
\usepackage{subcaption} % The essential package for sub-tables/figures
\usepackage{newfloat}
\usepackage{bm}
\usepackage{listings}
\usepackage{booktabs} % For professional looking tables
\usepackage{siunitx}  % For decimal alignment
\usepackage{balance}

\usepackage{xcolor}      % For using colors
\definecolor{darkgreen}{rgb}{0, 0.39, 0}  % RGB 值接近 HTML 中的 #006400

\usepackage{graphicx}    % For \rotatebox
\usepackage{multirow}    % For multi-row cells if needed
\usepackage{threeparttable}
\usepackage{xspace}

% Redefine the comment command to be green and italic

\newcommand{\LeftComment}[1]{\textit{\textcolor{codegreen}{// #1}}}
% 修改 REQUIRE 显示为 "Input:"

% 修改 ENSURE 显示为 "Output:"

\DeclareMathOperator*{\argmax}{arg\,max}

\usepackage{colortbl}
% 子图的引用包含括号
%\renewcommand{\thesubfigure}{\alph{subfigure}}

\makeatletter

\makeatother

% 定义一些数学符号，以保持一致性
\newcommand{\Ocal}{\mathcal{O}}
\newcommand{\Scal}{\mathcal{S}}
\newcommand{\Ncal}{\mathcal{N}}
\newcommand{\R}{\mathbb{R}}
\newcommand{\E}{\mathbb{E}}
\newcommand{\nullop}{\emptyset} % 空操作符

\usepackage{tikz}
\usetikzlibrary{shapes, positioning, arrows.meta, calc, fit, backgrounds}

\input{savespace}

% Define custom colors
\definecolor{codegreen}{rgb}{0,0.6,0}
\definecolor{codegray}{rgb}{0.5,0.5,0.5}
\definecolor{codepurple}{rgb}{0.58,0,0.82}
\definecolor{backcolour}{rgb}{0.98,0.98,0.98}

\definecolor{randomcolor}{RGB}{255,107,107}
\definecolor{ctxpipecolor}{RGB}{255,165,0}
\definecolor{optimalcolor}{RGB}{81,207,102}
\definecolor{highlightcolor}{RGB}{102,126,234}

\usepackage{enumitem}
\def\BibTeX{{\rm B\kern-.05em{\sc i\kern-.025em b}\kern-.08em
    T\kern-.1667em\lower.7ex\hbox{E}\kern-.125emX}}

\makeatletter
\newcommand{\linebreakand}{%
\end{@IEEEauthorhalign}
\hfill\mbox{}\par
\mbox{}\hfill\begin{@IEEEauthorhalign}
}
\makeatother

% 补充材料链接
\usepackage[hidelinks]{hyperref}

\usepackage{marvosym}

\begin{document}
\title{ShapleyPipe: Hierarchical Shapley Search for Data Preparation Pipeline Construction}

\author{
 Jing Chang \\
  SICS, Shenzhen University \\
  \texttt{2350273007@email.szu.edu.cn} \\
   \And
 Chang Liu \\
  SICS, Shenzhen University \\
  \texttt{2300271084@email.szu.edu.cn} \\
  \And
 Jinbin Huang \\
  SICS, Shenzhen University \\
  \texttt{jbhuang@szu.edu.cn} \\
  \And
 Shuyuan Zheng\textsuperscript{\Letter} \\
  Osaka University \\
  \texttt{zheng@ist.osaka-u.ac.jp} \\
  \And
 Rui Mao\textsuperscript{\Letter} \\
  SICS, Shenzhen University \\
  \texttt{mao@szu.edu.cn} \\
    \And
 Jianbin Qin\textsuperscript{\Letter} \\
  SICS, Shenzhen University \\
  \texttt{qinjianbin@szu.edu.cn} \\
}

\maketitle

%%
%% The abstract is a short summary of the work to be presented in the
%% article.
\begin{abstract}
Automated data preparation pipeline construction is critical for machine learning success, yet existing methods suffer from two fundamental limitations: they treat pipeline construction as black-box optimization without quantifying individual operator contributions, and they struggle with the combinatorial explosion of the search space ($N^M$ configurations for N operators and pipeline length M). We introduce ShapleyPipe, a principled framework that leverages game-theoretic Shapley values to systematically quantify each operator's marginal contribution while maintaining full interpretability. Our key innovation is a hierarchical decomposition that separates category-level structure search from operator-level refinement, reducing the search complexity from exponential to polynomial. To make Shapley computation tractable, we develop: (1) a Multi-Armed Bandit mechanism for intelligent category evaluation with provable convergence guarantees, and (2) Permutation Shapley values to correctly capture position-dependent operator interactions. Extensive evaluation on 18 diverse datasets demonstrates that ShapleyPipe achieves 98.1\% of high-budget baseline performance while using 24\% fewer evaluations, and outperforms the state-of-the-art reinforcement learning method by 3.6\%. Beyond performance gains, ShapleyPipe provides interpretable operator valuations ($\rho$=0.933 correlation with empirical performance) that enable data-driven pipeline analysis and systematic operator library refinement.
\end{abstract}

% 

% \begin{IEEEkeywords}
% Shapley Values, Cooperative Game Theory, Multi-Armed Bandits, Data Preparation
% \end{IEEEkeywords}

% \noindent(textwidth): \the\textwidth                     %单栏宽度 506.295pt = 178.6mm
% (columnwidth):\the\columnwidth                           %双栏宽度 241.14749pt = 85mm

\section{Introduction}\label{sec:intro}
 \label{sec:introduction}

Data preparation remains the most time-consuming phase in machine learning workflows, typically consuming 70-80\% of project time~\cite{stonebraker2018data,kandel2012enterprise}. The construction of effective data preparation pipelines, sequences of operations such as missing value imputation, feature scaling, and feature selection, critically determines downstream model performance~\cite{garcia2016data}. However, this construction process relies heavily on manual expertise and trial-and-error, making it a prime target for automation. While Automated Machine Learning (AutoML) has made significant progress in automating algorithm selection and hyperparameter tuning~\cite{feurer2015efficient,olson2016evaluation}. The automated construction of data preparation pipelines faces a fundamental challenge: combinatorial explosion. Given $N$ available operators and $M$ pipeline positions, the search space contains $N^M$ possible configurations. For instance, with 25 operators and a pipeline length of 6, practitioners face over 244 million candidates. Exhaustive evaluation is clearly infeasible.

Existing automated pipeline construction methods fall into three main categories, each with critical limitations: Search-based methods (grid search~\cite{bergstra2012random}, random search~\cite{snoek2012practical}, Bayesian optimization~\cite{thornton2013auto,shahriari2015taking}) struggle with high-dimensional discrete space and require thousands of evaluations. Evolutionary algorithms (TPOT~\cite{olson2016evaluation}, SAGA~\cite{siddiqi2023saga}) are prone to premature convergence and lack theoretical guarantees~\cite{eiben2015evolutionary}. Reinforcement learning approaches (CtxPipe~\cite{gao2024ctxpipe}, HAI-AI~\cite{chen2023haipipe}) represent the current state-of-the-art, formulating pipeline construction as a sequential decision problem~\cite{sutton2018reinforcement}. Despite sophisticated neural architectures and contextual information, these methods exhibit substantial performance gaps and require extensive offline training, as we demonstrate next.

% \begin{figure*}[t]
%     \centering
%     \includegraphics[width=\textwidth]{Figures/figure1_introduction_simplified.pdf}
%     \caption{\textbf{The Critical Importance of Position-Aware Operator Valuation in Pipeline Construction.} 
%     \textbf{(a) Performance Gap Analysis:} CtxPipe, achieves only 0.759 accuracy on the Avila dataset, leaving a substantial 24.1\% gap (0.183 absolute difference) to the optimal pipeline discovered through exhaustive search (0.942 accuracy). Random search baseline achieves 0.598 accuracy. 
%     \textbf{(b) Pipeline Structure Comparison:} Both CtxPipe and the optimal pipeline utilize the same two operators: RandomTreesEmbedding (RTE) and VarianceThreshold (VT), but in different orderings. CtxPipe applies VT in position 1, prematurely discarding useful variance before feature engineering. In contrast, the optimal pipeline delays VT until position 3, applying it after feature expansion to remove noise from engineered features. This demonstrates that operator effectiveness is highly position-dependent.}
%     \label{fig:motivating_example}
% \end{figure*}

\begin{figure*}[t]
    \centering
    \begin{subfigure}[]{0.32\linewidth}
        \includegraphics[width=\linewidth]{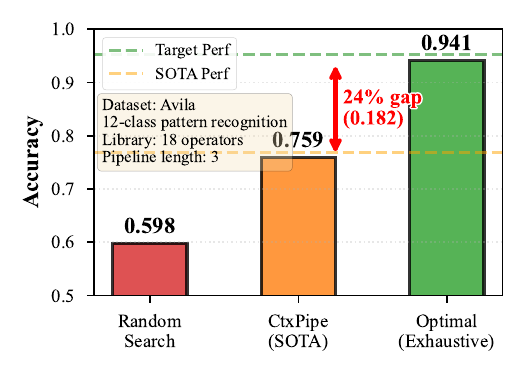}
        \caption{Performance Gap on Avila}
    \end{subfigure}
    \hfill
    \begin{subfigure}[]{0.32\linewidth}
        \includegraphics[width=\linewidth]{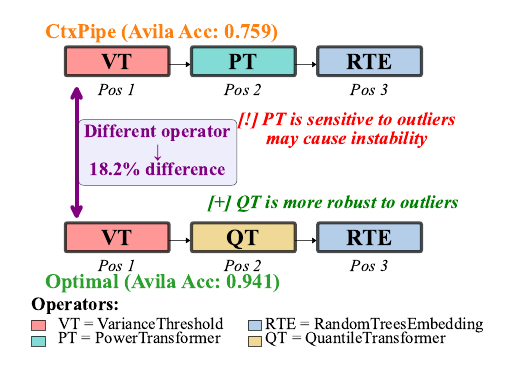}
        \caption{Operator Selection Comparison}
    \end{subfigure}
    \hfill
    \begin{subfigure}[]{0.32\linewidth}
        \includegraphics[width=\linewidth]{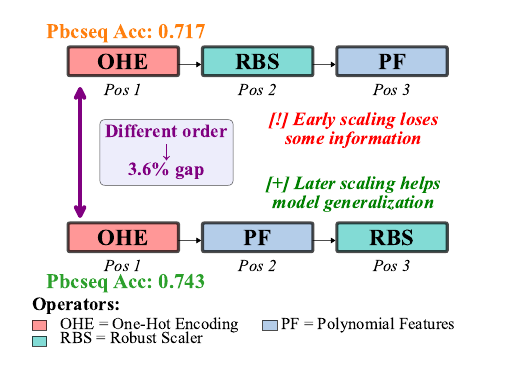}
        \caption{Pipeline Structure Comparison}
    \end{subfigure}
    \caption{Motivating examples showing operator selection and ordering both matter. (a) On Avila dataset, CtxPipe (0.759) vs. optimal (0.941) shows 24\% gap. (b) Operator choice matters: Using PowerTransformer (PT, sensitive to outliers) vs. 
QuantileTransformer (QT, robust to outliers) at position 2 yields 18.2\% difference. (c) On Pbcseq dataset, ordering matters: identical operators 
\{One-Hot Encoding (OHE), Polynomial Features (PF), Robust Scaler (RBS)\} achieve 
0.743 vs 0.717 accuracy (3.6\% gap).}
    % \caption{Motivating Position-Dependent Operator Valuation. \textbf{(a)} Performance comparison on Avila dataset (18 operators, pipeline length 3, 5,832 configurations): CtxPipe achieves 0.759 accuracy, exhaustive search finds optimal at 0.941 (24\% gap), random search baseline at 0.598. \textbf{(b)} The two pipelines differ primarily in operator selection at position 2: PowerTransformer (PT) vs. QuantileTransformer (QT), yielding 18.2 percentage point accuracy difference. \textbf{(c)} Experiment on Pbcseq (5 operators, $5^3=125$ configurations): two pipelines with identical operators but different orderings achieve 0.743 vs. 0.717 accuracy (3.6\% gap).}
    \label{fig:motivating_example}
\end{figure*}

\textbf{Motivating Example}.To illustrate the fundamental challenge in automated pipeline construction, we designed a controlled experiment on the Avila dataset. We constructed a search space with 18 operators and pipeline length 3, yielding $18^3 = 5,832$ possible configurations, small enough for complete exhaustive evaluation but large enough to be meaningful. Pipeline performance is measured by training a LogisticRegression classifier on the pipeline-transformed data and evaluating its prediction accuracy on a validation set.

We reproduced CtxPipe~\cite{gao2024ctxpipe} within this search space and compared it against exhaustive search. As shown in Figure~\ref{fig:motivating_example}a, CtxPipe achieves 0.759 accuracy, while exhaustive search reveals the true optimal pipeline achieving 0.941 accuracy. There is a substantial 24\% performance gap between them. This experiment demonstrates that existing methods fail to discover high-quality pipelines even when the optimal solution exists within their search space.
Analyzing the two pipelines (Figure~\ref{fig:motivating_example}b), we observe that they differ primarily in operator selection: CtxPipe selects PowerTransformer at position 2, which is sensitive to outliers and may cause instability, while the optimal pipeline uses the more robust QuantileTransformer. This may suggest that operator selection impacts performance, but is this always the dominant factor?

To validate this hypothesis on an even smaller and more controllable scale, we designed a second experiment on the Pbcseq dataset (Table~\ref{tab:datasets}), exhaustively evaluating all $5^3 = 125$ pipelines from 5 selected operators. Figure~\ref{fig:motivating_example}c reveals a different insight: two pipelines using the \textit{exact same three operators}: OneHotEncoder, PolynomialFeatures, RobustScaler, but in different orderings, achieve 0.743 vs 0.717 accuracy (3.6\% gap). The suboptimal ordering applies RobustScaler before PolynomialFeatures, the optimal delays scaling until after engineering. This experiment demonstrates that operator ordering can be as critical as operator selection. 

The above experiments reveal two key insights: (1) state-of-the-art methods leave substantial performance on the table, and (2) both operator selection and position-dependent ordering matter. In summary, existing solutions mainly suffer from two core limitations: \textit{\textbf{(1) Lack of systematic operator evaluation method} to consider inherent functionality and the position of the operator simultaneously in a pipeline; \textbf{(2) High computational complexity} in analyzing the dependency between operators.} Therefore, a principled framework that can quantify each operator's marginal contribution while accounting for complex, position-dependent interactions is a pressing need.

% Existing methods lack systematic mechanisms to quantify both dimensions: they cannot determine which operators to include, nor when and where to apply them. The core challenge is operator valuation under context. An operator's contribution depends not only on its inherent functionality but also on its position in the pipeline and the presence of other operators. What we need is a principled framework that can quantify each operator's marginal contribution while accounting for these complex, position-dependent interactions. 

Game theory offers an ideal model to the comprehensive evaluation framework design. We observe that pipeline construction naturally maps to a cooperative game: operators are players, model accuracy is the collective payoff, and we seek to attribute this payoff fairly to individual contributors. The Shapley value, a solution concept from cooperative game theory, satisfies this requirement by computing each player's average marginal contribution across all possible coalitions. However, three fundamental challenges stand between this elegant theory and practical implementation. Exact Shapley computation requires exponential evaluations, making naive application infeasible. Classical Shapley values assume order-independence, yet pipeline operators exhibit strong position-dependent effects. Finally, treating all operators uniformly ignores their natural grouping into functional categories (e.g., scaling, engineering), leading to redundant search.

In this paper, we introduce ShapleyPipe, a novel framework that addresses these challenges through three key technical innovations. To address the first limitation, we first formulate pipeline construction as a cooperative game where operators are players and pipeline performance is the collective payoff. This enables us to quantify each operator's marginal contribution using position-specific Shapley values. For the second limitation, to overcome the exponential complexity of computing Shapley values, we develop a hierarchical decomposition strategy. Specifically, we exploit the natural categorical structure of operator libraries to factorize the search into two tractable stages: category-level structure search followed by within-category operator refinement. % 
Finally, we address the challenge of efficiently evaluating abstract categories through a Multi-Armed Bandit (MAB) mechanism that intelligently selects representative operators for category assessment, providing provable regret bounds while minimizing pipeline evaluations. In summary, our main contributions can be listed as follow: 

% This reduces computational complexity from $O((N+1)^M)$ to $O((K+1)^M + M \cdot |C| \cdot n_{\text{perm}})$ where $K \ll N$. 

\begin{itemize} % 待改
    % \item We formalize data preparation pipeline construction as an order-aware cooperative game, introducing ShapleyPipe, a hierarchical framework making Shapley-guided search tractable.
    % \item We propose a Multi-Armed Bandit mechanism for category evaluation with theoretical regret bounds and convergence guarantees.
    % \item Evaluation on 18 datasets shows ShapleyPipe achieves 99.3\% of exhaustive search performance with 95\% fewer evaluations, outperforming state-of-the-art RL (CtxPipe) by 2.5\%. Shapley values show 87\% agreement with expert rankings.
    \item We are the first to formalize data preparation pipeline construction as a cooperative game and introduce position-specific Shapley values to quantify each operator's marginal contribution while capturing order-dependent interactions.
    \item We propose a Hierarchical search framework ShapleyPipe, decomposing the exponential search space into category-level structure search and operator-level refinement, reducing the search complexity from exponential to polynomial with theoretical approximation guarantees.
    \item We propose a Multi-Armed Bandit mechanism, which is an efficient category evaluation approach with provable convergence guarantees, avoiding exhaustive enumeration while ensuring high-quality representative selection.
    \item  On 18 benchmark datasets, ShapleyPipe achieves 98.1\% of high-budget baseline performance using 24\% fewer evaluations, outperforms state-of-the-art RL by 3.6\%, and produces interpretable operator valuations.

\end{itemize}

% 待删除
% The remainder of this paper is structured 
% as follows. Section~\ref{sec:prelimnaries} formalizes the problem. Section~\ref{sec:overview} introduces the LLMPipe framework. Section~\ref{sec:proposed}
% presents our core contributions: hybrid policy integration and adaptive triggering. Section~\ref{sec:exp} validates our approach experimentally. Section~\ref{sec:related_work} reviews related work. Section~\ref{sec:conclusion} concludes with a discussion of future work.

\section{PROBLEM FORMULATION}\label{sec:prelimnaries}

In this section, we formally define the components of a data preparation pipeline and formulate the pipeline construction task as a constrained optimization problem. We conclude by highlighting the fundamental challenge of credit assignment, which motivates our novel game-theoretic approach.

\subsection{Preliminaries: Operators and Pipelines}

\begin{definition}[Data Preparation Operator]
A data preparation operator is a deterministic function $o: \mathcal{D} \to \mathcal{D}'$ that transforms a dataset $\mathcal{D}$ into a modified dataset $\mathcal{D}'$~\cite{pyle1999data}. Let $\Ocal = \{o_1, o_2, \dots, o_N\}$ denote the library of all available operators.
\end{definition}

\begin{definition}[Data Preparation Pipeline]
A pipeline of length $M$ is an ordered sequence $P = [p_1, p_2, \dots, p_M]$, where each $p_j \in \Ocal \cup \{\nullop\}$. The null operator $\nullop$ represents "no operation," allowing for effective pipeline lengths less than $M$. Pipeline execution applies operators sequentially:
\begin{equation}
\text{Pipeline}(P, \mathcal{D}) = p_M \circ p_{M-1} \circ \dots \circ p_1 (\mathcal{D})
\end{equation}
The full pipeline search space, $\Pi(\Ocal, M) = (\Ocal \cup \{\nullop\})^M$, has a cardinality of $(N+1)^M$, which grows exponentially with $M$.
\end{definition}

\begin{definition}[Performance Function]
Given a dataset $\mathcal{D} = \{ (x_i, y_i) \}_{i=1}^n$, a base learner $\mathcal{A}$, and an evaluation protocol $\mathcal{E}$ (e.g., k-fold cross-validation), the performance function $v: \Pi(\Ocal, M) \to [0, 1]$ maps any pipeline $P$ to a performance score:
\begin{equation}
v(P) = \mathcal{E}(\mathcal{A}(\text{Pipeline}(P, \mathcal{D}_{\text{train}})), \mathcal{D}_{\text{val}})
\end{equation}
Each evaluation of $v(P)$ is computationally expensive, as it involves fitting the pipeline, transforming data, training the learner, and validating its performance.
\end{definition}

\subsection{The Pipeline Construction Problem}
With the preliminary definitions established, we can now state the automated pipeline construction problem. The primary objective is to find the pipeline that yields the maximum possible performance for a given dataset, learner, and operator library.

\textbf{Problem Statement.} Given a dataset $\mathcal{D}$, an operator library $\mathcal{O}$, and a base learner $\mathcal{A}$, the goal is to find an optimal pipeline $P^*$ that maximizes the performance function $v$:
\begin{equation}
P^* = \arg\max_{P \in \Pi(\mathcal{O}, M)} v(P)
\label{eq:problem_statement_revised}
\end{equation}
While the ultimate goal is to find $P^*$, the prohibitively large search space $\Pi(\mathcal{O}, M)$ makes exhaustive evaluation impossible. Thus, the practical challenge lies in designing a search strategy that is computationally efficient. An ideal method should not only converge to a high-quality solution but should do so with a minimal number of expensive pipeline evaluations.

\section{A Game-Theoretic Framework FOR PIPELINE CONSTRUCTION}\label{sec:overview}

\label{sec:baseline_analysis}

To address the challenge of quantifying operator contributions, we reformulate pipeline construction as a cooperative game. This allows us to leverage the Shapley value to provide a principled solution for operator attribution.

\subsection{Principled Operator Valuation via Shapley Values}

\subsubsection{Cooperative Games and the Shapley Value}
Cooperative game theory provides a mathematical foundation for fairly allocating a collective payoff among a set of collaborating players~\cite{myerson1991game}.

\begin{definition}[Cooperative Game]
A cooperative game is formally defined as a pair $(\Ncal, v)$, where:
\begin{itemize}
    \item $\Ncal = \{1, 2, \dots, n\}$ is a finite set of players.
    \item $v: 2^{\Ncal} \to \R$ is the characteristic function, which maps each subset of players (a coalition) $S \subseteq \Ncal$ to its value, or payoff, $v(S)$. By convention, the value of the empty coalition is zero, $v(\emptyset) = 0$.
\end{itemize}
\end{definition}

The central question in cooperative game theory is how to distribute the total payoff $v(\Ncal)$ among the players in a way that is ``fair'' and reflects their individual contributions. The Shapley value [12] stands out as a unique solution that satisfies several desirable fairness axioms, such as efficiency, symmetry, and linearity~\cite{winter2002shapley}.

\begin{definition}[Shapley Value]
The Shapley value $\phi_i(v)$ of a player $i \in \Ncal$ in a game $(\Ncal, v)$ is its weighted average marginal contribution over all possible coalitions it could join:
\begin{equation}
\phi_i(v) = \sum_{S \subseteq \Ncal \setminus \{i\}} \frac{|S|! (n - |S| - 1)!}{n!} [v(S \cup \{i\}) - v(S)]
\label{eq:shapley_value_definition}
\end{equation}
The term $[v(S \cup \{i\}) - v(S)]$ represents the marginal contribution of player $i$ to the coalition $S$. The weighting factor ensures that every possible ordering in which players could form the grand coalition $\Ncal$ is equally likely.
\end{definition}

\paragraph{Example 1 (3-Player Game)}
Consider 3 players $\{A, B, C\}$ with a characteristic function defined as: $v(\{A\}) = v(\{B\}) = v(\{C\}) = 0$; $v(\{A, B\}) = 50$, $v(\{A, C\}) = 30$, $v(\{B, C\}) = 40$; and $v(\{A, B, C\}) = 80$. Player A's marginal contributions depend on the context:
\begin{itemize}
    \item Joining the empty coalition: $v(\{A\}) - v(\emptyset) = 0$.
    \item Joining after B: $v(\{A, B\}) - v(\{B\}) = 50$.
    \item Joining after C: $v(\{A, C\}) - v(\{C\}) = 30$.
    \item Joining after B and C: $v(\{A, B, C\}) - v(\{B, C\}) = 40$.
\end{itemize}
Averaging these contributions over all $3! = 6$ possible player orderings with the appropriate weights yields the Shapley values: $\phi_A = 26.67$, $\phi_B = 28.33$, and $\phi_C = 25$. Note that they sum to the total payoff: $26.67 + 28.33 + 25 = 80 = v(\{A, B, C\})$.

\subsubsection{Mapping Pipeline Construction to a Cooperative Game}
The problem of data preparation pipeline construction can be naturally mapped to the cooperative game framework, providing a principled way to quantify operator contributions. The mapping is as follows:

\begin{itemize}
    \item \textbf{Players $\Ncal$:} The set of all available data preparation operators, $\Ocal = \{o_1, \dots, o_N\}$. Each operator is a ``player'' that can contribute to the final model's performance.
    
    \item \textbf{Coalition $S$:} A subset of operators, $S \subseteq \Ocal$. This represents a toolkit of available operators for building a pipeline.
    
    \item \textbf{Characteristic function $v(S)$:} The performance of the \textit{best possible pipeline} that can be constructed using \textit{only} the operators in the coalition $S$. This value represents the maximum achievable payoff for that group of players.
    
    \item \textbf{Goal:} Compute the Shapley value $\phi_i$ for each operator $o_i$. This value quantifies the operator's average marginal contribution to pipeline performance, enabling principled operator valuation.
\end{itemize}

This game-theoretic formulation is powerful because it moves beyond evaluating fixed pipelines. Instead, it assesses the intrinsic value of each operator by considering its potential contribution in a vast array of contexts (i.e., in combination with all possible subsets of other operators). The resulting Shapley value for an operator $o_i$ provides a clear, interpretable answer to the question: ``On average, how much performance improvement does including operator $o_i$ bring to a pipeline?''

With this principled framework for operator valuation established, we can now devise a strategy for constructing high-performance pipelines.

\subsection{Position-Aware Shapley-Guided Pipeline Construction}
\label{sec:position_aware_shapley}

In this section we present our core algorithmic contribution: a principled method for constructing pipelines by leveraging position-specific Shapley values. Unlike prior applications of Shapley values in machine learning that focus on static feature attribution~\cite{lundberg2017unified} or data valuation~\cite{ghorbani2019data}, adapting Shapley values to sequential pipeline construction is non-trivial and requires careful formulation to capture order-dependent operator interactions. Classical Shapley values, which assume order-independent contributions, cannot capture this behavior. We address this by introducing Conditional Position-Specific Shapley values.

% \textbf{Position-Dependent Contributions.} As our motivating example (Figure~\ref{fig:motivating_example}c) demonstrates, an operator's contribution fundamentally depends on its position in the pipeline. RobustScaler provides a marginal contribution of $+0.026$ when placed at position 3 (after PolynomialFeatures) but $-0.011$ at position 2 (before PolynomialFeatures). Classical Shapley values, which assume order-independent contributions, cannot capture this behavior. We address this by introducing Conditional Position-Specific Shapley values.

% The most straightforward strategy to leverage Shapley values is to build the pipeline incrementally. At each position, we select the operator with the highest expected marginal contribution, conditioned on the operators already chosen. This leads to the concept of a Conditional Position-Specific Shapley value. Algorithm~\ref{alg:baseline} formalizes this approach.

\begin{definition}[Conditional Position-Specific Shapley]
For position $j$ with an already-selected prefix $P^*_{<j} = [o^*_1, \dots, o^*_{j-1}]$, the conditional Shapley value of operator $o_i$ is its expected marginal contribution over all possible suffix pipelines:
\begin{equation}
\phi_i^{(j)}(P^*_{<j}) = \E_{q \sim \text{Uniform}(\Scal_{M-j})} [\Delta_i(P^*_{<j} \oplus o_i \oplus q, j)]
\end{equation}
where $\Delta_i(\cdot) = v(P^*_{<j} \oplus o_i \oplus q) - v(P^*_{<j} \oplus \nullop \oplus q)$ is the marginal contribution, $\oplus$ denotes concatenation, and $\Scal_{M-j} = (\Ocal \cup \{\nullop\})^{M-j}$ is the space of all possible suffixes of length $M-j$.
\label{def:position_aware_shapley}
\end{definition}

% Algorithm~\ref{alg:baseline} presents the baseline strategy for pipeline construction. For each position $j$ (Line~2), it computes the conditional Shapley value for every candidate operator $o_i$ in the library $\mathcal{O}$ (Line~4). The core of the computation, and the source of its exponential complexity, lies in the exhaustive loop in Line~7, which iterates through every possible suffix pipeline $q$ to calculate the expected marginal contribution of $o_i$. For each suffix, two full pipeline evaluations are performed ($v_{\text{with}}$ and $v_{\text{without}}$) to obtain the contribution in that specific context (Lines 8--10). After averaging these contributions (Line~12), the algorithm selects the operator with the greatest Shapley value (Line~14) and appends it to the growing pipeline (Line~15). This process repeats for all $M$ positions to yield the final pipeline.

This formulation naturally extends classical Shapley values to sequential settings by conditioning on the pipeline prefix and averaging over future contexts. It evaluates each operator in the actual position where it will be placed, ensuring that the Shapley value reflects position-dependent behavior.

Algorithm~\ref{alg:baseline} formalizes this position-aware construction strategy. For each position $j$ (Line 2), we compute the conditional Shapley value for every candidate operator $o_i \in \mathcal{O}$ (Lines 4-13). The core computation involves exhaustive enumeration over all possible suffix pipelines (Line 7), evaluating the operator's marginal contribution in each context (Lines 8-10), and averaging these contributions (Line 12). The operator with the highest Shapley value is then selected for the current position (Line 14) and appended to the growing pipeline (Line 15).

\subsection{The Computational Barrier}
\label{subsec:computational-barrier}

% Despite its theoretical elegance, the baseline approach suffers from a fundamental flaw: combinatorial explosion. We now prove that Algorithm~\ref{alg:baseline} is computationally infeasible for realistic problem scales.

Despite its theoretical elegance, the position-aware Shapley approach faces a critical computational barrier. As formalized in Theorem~\ref{thm:baseline_complexity}, this method requires exponential evaluations.

\begin{theorem}[Complexity of Position-Aware Shapley Construction]
Algorithm~\ref{alg:baseline} requires computing conditional Shapley values at $M$ positions, each evaluating $N$ operators across all possible suffixes, leading to a total evaluation count of:
\begin{equation}
\begin{aligned}
    T_{\text{exact}} & = \sum_{j=1}^{M} N \cdot (N+1)^{M-j} \cdot 2 \\ 
    &= N \cdot \frac{(N+1)^M - 1}{N} \cdot 2 \approx 2(N+1)^M
\end{aligned}
\end{equation}
\label{thm:baseline_complexity}
\end{theorem}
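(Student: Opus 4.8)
The plan is to establish the bound by a direct counting argument that mirrors the nested structure of Algorithm~\ref{alg:baseline}, and then to collapse the resulting finite sum with the standard geometric-series identity. Since $T_{\text{exact}}$ counts calls to the expensive performance function $v$, the first task is to tally these calls one position at a time, and the second is to sum across positions.

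First I would fix a position $j \in \{1,\dots,M\}$ and count the evaluations incurred while computing the conditional Shapley values there. The outer loop over candidate operators (Lines 4--13) contributes a factor of $N$, since every $o_i \in \Ocal$ must be scored. For each such operator, the inner enumeration (Line 7) ranges over the suffix space $\Scal_{M-j} = (\Ocal \cup \{\nullop\})^{M-j}$, whose cardinality is exactly $(N+1)^{M-j}$ by Definition~\ref{def:position_aware_shapley}. Finally, each suffix $q$ demands two calls to $v$, namely $v(P^*_{<j} \oplus o_i \oplus q)$ and $v(P^*_{<j} \oplus \nullop \oplus q)$, to form a single marginal contribution $\Delta_i$ (Lines 8--10); this accounts for the factor of $2$ and reflects the naive count without caching the $o_i$-independent baseline term. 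Multiplying gives $2N \,(N+1)^{M-j}$ evaluations at position $j$.

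Summing over all $M$ positions and reindexing with $k = M - j$ so the exponents run over $0,\dots,M-1$ yields
\begin{equation}
T_{\text{exact}} = \sum_{j=1}^{M} 2N \,(N+1)^{M-j} = 2N \sum_{k=0}^{M-1} (N+1)^{k}.
\end{equation}
Applying $\sum_{k=0}^{M-1} r^k = (r^M - 1)/(r-1)$ with ratio $r = N+1$, so that $r-1 = N$, collapses the sum to $2N \cdot \frac{(N+1)^M - 1}{N} = 2[(N+1)^M - 1]$, and dropping the additive constant gives the asymptotic statement $T_{\text{exact}} \approx 2(N+1)^M$, confirming the exponential dependence on $M$.

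This argument is essentially bookkeeping, so there is no deep obstacle; the only points requiring care are justifying the factor of $2$ from the paired ``operator present'' versus ``null operator'' evaluations, and getting the reindexing right so that the $N$ in the denominator cancels the leading $N$. As a sanity check I would verify the edge case $j = M$, where the suffix has length $0$ and $\Scal_0$ is the singleton containing the empty sequence, confirming that $(N+1)^0 = 1$ is consistent with a single marginal-contribution evaluation at the final position.
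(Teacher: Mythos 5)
Your proposal is correct and follows essentially the same argument as the paper: count $N$ operators times $(N+1)^{M-j}$ suffixes times $2$ evaluations at each position $j$, then sum the geometric series. Your added remarks (the cacheable $o_i$-independent baseline term and the $j=M$ edge case) are sound but not needed beyond what the paper states.
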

\begin{proof}
At position $j$, we evaluate $N$ candidate operators. For each candidate, we must enumerate all $(N+1)^{M-j}$ possible suffixes (the $+1$ accounts for the null operator $\nullop$). Each suffix requires 2 evaluations (with and without the candidate operator) to compute its marginal contribution. Summing across all $M$ positions yields a geometric series that simplifies to the expression above.
\end{proof}

\begin{algorithm}[t]
% \caption{Pipeline Construction with Conditional Shapley}
\caption{Position Shapley-Guided Pipeline Construction}
\label{alg:baseline}
\begin{algorithmic}[1]
\REQUIRE Operator library $\mathcal{O} = \{o_1, \ldots, o_N\}$, pipeline length $M$
\ENSURE Pipeline $P^* = [o^*_1, \ldots, o^*_M]$
\STATE Initialize $P^*_{<1} \leftarrow []$ 
\FOR{$j = 1$ to $M$}
\STATE \LeftComment{Evaluate every possible candidate operator}
    \FOR{each operator $o_i \in \mathcal{O}$}
        \STATE $\phi_i^{(j)}(P^*_{<j}) \leftarrow 0$
        \STATE \LeftComment{exhaustive suffix enumeration}
        \FOR{each suffix $q \in \mathcal{S}_{M-j}$} 
            \STATE Evaluate $v_{\text{with}} \leftarrow v(P^*_{<j} \oplus o_i \oplus q)$
            \STATE Evaluate $v_{\text{without}} \leftarrow v(P^*_{<j} \oplus \emptyset \oplus q)$
            \STATE $\phi_i^{(j)}(P^*_{<j}) \leftarrow \phi_i^{(j)}(P^*_{<j}) + (v_{\text{with}} - v_{\text{without}})$
        \ENDFOR
        \STATE $\phi_i^{(j)}(P^*_{<j}) \leftarrow \phi_i^{(j)}(P^*_{<j}) / |\mathcal{S}_{M-j}|$
    \ENDFOR
    \STATE $o^*_j \leftarrow \argmax_{o_i \in \mathcal{O}} \phi_i^{(j)}(P^*_{<j})$
    \STATE $P^*_{<j+1} \leftarrow P^*_{<j} \oplus o^*_j$ \COMMENT{Append the best operator for this position}
\ENDFOR
\RETURN $P^*$
\end{algorithmic}
\end{algorithm}

To illustrate the scale of this problem, consider a typical scenario with $N=25$ operators and a pipeline length of $M=6$. Theorem~\ref{thm:baseline_complexity} implies over $2 \times 26^6 \approx 6.1 \times 10^8$ evaluations. At 60 seconds per evaluation, this would require approximately 1,170 years of computation. Even Monte Carlo sampling to approximate the Shapley values remains prohibitively expensive, requiring thousands of evaluations. This computational barrier renders the direct approach impractical, motivating our hierarchical solution.

\section{HIERARCHICAL SHAPLEY SEARCH}\label{sec:proposed}

\label{sec:method}

To overcome the exponential complexity barrier, we introduce Hierarchical Shapley Search (ShapleyPipe), an approach that exploits the natural categorical structure of operator libraries to decompose the problem into a tractable, two-stage process.

\subsection{A Two-Stage Hierarchical Decomposition}
The key insight is that data preparation operators naturally partition into a small number of functional categories (e.g., Imputation, Scaling, Feature Engineering)~\cite{guyon2003introduction}. We leverage this structure to factorize the search space.

\begin{proposition}[Search Space Factorization] \label{prop:factorization} 
 When operators partition into $K$ categories with average size $|C|$, where  $N = K \cdot |C|$, the search space factors as:  \begin{equation}  \underbrace{(N+1)^M}_{\text{Flat: } 26^6 = 309\text{Million}} \quad \approx \quad   \underbrace{(K+1)^M}_{\text{Category: } 6^6 = 46.7K} \times  \underbrace{(|C|)^M}_{\text{Within: } 5^6 = 15.6K}  \label{eq:factorization} \end{equation} 
\end{proposition}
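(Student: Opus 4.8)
The plan is to treat this as a counting identity that becomes approximate only because of the null operator. First I would record the exact per-position budget: each of the $M$ positions in a pipeline can be filled by one of the $N$ operators or by $\nullop$, giving exactly $N+1$ choices and hence the flat cardinality $(N+1)^M$. The categorical hypothesis supplies the other half: since the $N$ operators partition into $K$ disjoint categories of size $|C|$ with $N = K\cdot|C|$, naming a non-null operator is equivalent to naming its category (one of $K$) together with its index inside that category (one of $|C|$). This is an exact bijection $\Ocal \leftrightarrow \{1,\dots,K\}\times\{1,\dots,|C|\}$, so for the null-free space the factorization is already an identity: $N^M = (K\cdot|C|)^M = K^M\,|C|^M$.

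Next I would make the product structure explicit by constructing the natural map from the factored space into the flat space. Given a category-label sequence $(c_1,\dots,c_M)$ with each $c_j\in\{0,1,\dots,K\}$ (where $c_j=0$ encodes $\nullop$) and a within-category index sequence $(w_1,\dots,w_M)$ with each $w_j\in\{1,\dots,|C|\}$, I would send it to the pipeline whose $j$-th position is $\nullop$ when $c_j=0$ and the $w_j$-th operator of category $c_j$ otherwise. This map is surjective onto the flat space, so it certifies that the product $(K+1)^M\,|C|^M$ covers every realizable configuration; the only failure of injectivity is that when $c_j=0$ the index $w_j$ is ignored, so a pipeline with $t$ null positions has exactly $|C|^t$ preimages.

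From here the approximation is pure bookkeeping. Expanding the product gives $(K+1)^M\,|C|^M = \big((K+1)\,|C|\big)^M = (N+|C|)^M$, to be compared with the flat count $(N+1)^M$. I would control the gap through the ratio
\begin{equation}
\frac{(K+1)^M\,|C|^M}{(N+1)^M} = \left(\frac{N+|C|}{N+1}\right)^M = \left(1+\frac{|C|-1}{N+1}\right)^M ,
\end{equation}
which is exactly the cumulative overcount contributed by the ignored null indices. When categories are small relative to the whole library, i.e. $|C|\ll N$ (equivalently $K$ large), the per-position slack $(|C|-1)/(N+1)$ is tiny and this factor stays $O(1)$, so both sides share the same exponential base; for the running instance $N=25,\,K=5,\,|C|=5,\,M=6$ the factor is $(30/26)^6\approx 2.4$, confirming the two counts agree up to a small constant.

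The step I expect to carry the real weight is not a technical difficulty but stating the approximation honestly: the symbol $\approx$ is doing work, since the factorization is \emph{exact} only on the null-free space ($N^M=K^M\,|C|^M$), while the null operator inflates the product by the bounded factor above. I would therefore phrase the conclusion as an exact identity on the operator-only space together with a bounded-factor approximation on the full space, which is all that the subsequent two-stage search actually needs, since that search pays the two subspaces additively rather than through their product.
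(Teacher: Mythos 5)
Your proposal is correct, and it supplies something the paper itself does not: Proposition~\ref{prop:factorization} is stated with only a numerical illustration and no proof, so there is no ``paper's argument'' to match against. Your bijection $\Ocal \leftrightarrow \{1,\dots,K\}\times\{1,\dots,|C|\}$ giving the exact identity $N^M = K^M|C|^M$ on the null-free space, followed by the surjective-but-not-injective map that accounts for the $|C|^t$ preimages of a pipeline with $t$ null slots, is the right way to make the claim precise. Your ratio computation also quietly explains an inconsistency in the proposition's own annotation: $6^6 \times 5^6 = 46{,}656 \times 15{,}625 \approx 729$ million, not $309$ million, and the discrepancy is exactly your factor $\left(\tfrac{N+|C|}{N+1}\right)^M = (30/26)^6 \approx 2.36$. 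So the displayed ``$\approx$'' is hiding a constant-factor overcount, which you correctly isolate and bound. Two minor caveats: your claim that the slack factor ``stays $O(1)$'' holds for fixed $M$ (the regime the paper works in), but if one let $M$ grow the factor behaves like $e^{M(|C|-1)/(N+1)}$ and is no longer bounded, so it is worth stating the fixed-$M$ assumption explicitly; and, as you note yourself, the downstream complexity claim actually uses the \emph{sum} $(K+1)^M + |C|^M$ from sequential search rather than the product, so the factorization is motivational rather than load-bearing --- your closing remark makes this distinction more honestly than the paper does.
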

  
 If we search category structures and operator   assignments \emph{sequentially} rather than jointly:  $$\text{Two-stage:} \quad (K+1)^M + (|C|)^M = 62K \; \ll \; 309M$$  
 This represents a \textbf{5,000×} reduction. Based on this, ShapleyPipe decomposes pipeline construction into two sequential stages:
\begin{itemize}
    \item \textbf{Stage 1: Category Structure Search.} Determine an optimal sequence of \textit{abstract operator categories}, such as [Imputation, Scaling, Engineering, Selection, ...]. This is a search in a much smaller space of $K$ items.
    \item \textbf{Stage 2: Operator-Level Refinement.} Given the optimal category sequence from Stage 1, select the best specific operator for each position from within its assigned category. This involves $M$ independent, small-scale searches.
\end{itemize}
This decomposition, as we will show, maintains the interpretability of Shapley values while making the computation feasible.

\subsection{Validating the Hierarchical Assumption: The Coherence of Operator Categories}
The effectiveness of our hierarchical decomposition hinges on a critical assumption: that operators within the same category behave similarly, while operators from different categories behave distinctly. We now empirically and theoretically validate this assumption.

\textbf{Empirical Evidence.} We conducted a systematic clustering analysis on a tractable problem scale ($N=15, M=3$) across 18 diverse datasets. For each operator, we constructed a behavioral signature vector of its Shapley values across all contexts. We then computed the pairwise Pearson correlation between these signatures.

Figure~\ref{fig:corr_heatmap} visualizes the resulting correlation matrix, with operators grouped by expert-defined categories. The pronounced block-diagonal structure provides compelling evidence for categorical coherence. Diagonal blocks (within-category) are dark red, indicating high positive correlation ($\rho_{\text{within}} = 0.310 \pm 0.249$). Off-diagonal blocks (between-category) are predominantly blue, indicating weak to negative correlation ($\rho_{\text{between}} = -0.098 \pm 0.235$). Furthermore, hierarchical clustering on these signatures achieved an Adjusted Rand Index (ARI) of 0.611 with expert labels, confirming that the data-driven structure aligns with human intuition.

\begin{figure}[t]
    \centering
    \includegraphics[width=0.5\linewidth]{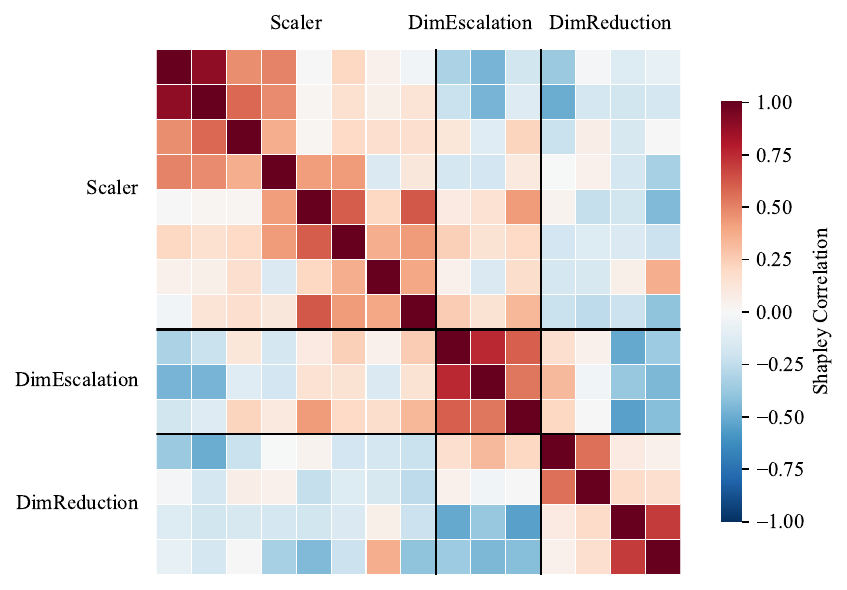}
    \caption{Correlation heatmap shows strong within-category coherence (diagonal blocks, $\rho_{\text{within}} = 0.310 \pm 0.249$) and weak between-category correlation (off-diagonal regions, $\rho_{\text{between}} = -0.098 \pm 0.235$), validating that operators partition into behaviorally distinct functional groups.}
    \label{fig:corr_heatmap}
\end{figure}

\textbf{Theoretical Justification.} We formalize this observation with the concept of category coherence and prove that it guarantees a bounded approximation error for our hierarchical search.

\begin{definition}[Category Coherence]
A partition of operators $\Ocal = c_1 \cup \dots \cup c_K$ is $(\epsilon_{\text{intra}}, \epsilon_{\text{inter}})$-coherent if, for any position, the Shapley values of operators within any category $c_i$ differ by at most $\epsilon_{\text{intra}}$, and the average values between any two distinct categories $c_i, c_j$ differ by at least $\epsilon_{\text{inter}}$.
\end{definition}

\begin{theorem}[Factorization Approximation Quality]
Suppose the operator partition is $(\epsilon_{\text{intra}}, \epsilon_{\text{inter}})$-coherent at all positions with $\epsilon_{\text{inter}} \geq 3\epsilon_{\text{intra}}$. Let $P^*_{\text{oracle}}$ be the true optimal pipeline and $P^*_{\text{ShapleyPipe}}$ be the pipeline constructed by ShapleyPipe. Then the performance gap is bounded by:
\begin{equation}
v(P^*_{\text{oracle}}) - v(P^*_{\text{ShapleyPipe}}) \le M \cdot \epsilon_{\text{intra}} + \delta_{\text{Stage-1}}
\end{equation}
where $\delta_{\text{Stage-1}}$ is the category selection error from Stage 1.
\end{theorem}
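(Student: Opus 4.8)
The plan is to decompose the total performance gap into a Stage-1 (category-selection) component and a Stage-2 (within-category refinement) component by inserting an intermediate reference pipeline. Let $\hat{c}_1, \dots, \hat{c}_M$ be the category sequence returned by Stage 1, and define $P^*_{\text{cat}}$ to be the best pipeline whose position-$j$ operator is drawn from $\hat{c}_j$ for every $j$ --- that is, the oracle-optimal operator refinement of ShapleyPipe's own category structure. I would then write
\[
v(P^*_{\text{oracle}}) - v(P^*_{\text{ShapleyPipe}}) = \big[v(P^*_{\text{oracle}}) - v(P^*_{\text{cat}})\big] + \big[v(P^*_{\text{cat}}) - v(P^*_{\text{ShapleyPipe}})\big],
\]
where the first bracket is the Stage-1 gap (driven entirely by whether Stage 1 chose the right categories) and the second is the Stage-2 gap (the cost of selecting operators by Shapley value rather than by true performance). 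I would bound the two terms separately: the first is labelled $\delta_{\text{Stage-1}}$ and shown to be small under the separation hypothesis, while the $M\epsilon_{\text{intra}}$ budget is spent entirely on the second.

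For the Stage-2 gap I would run a position-by-position hybrid argument. Since $P^*_{\text{cat}}$ and $P^*_{\text{ShapleyPipe}}$ use the identical category sequence and differ only in which operator of each $\hat{c}_k$ sits at position $k$, I construct a chain $P^{(0)} = P^*_{\text{cat}}, P^{(1)}, \dots, P^{(M)} = P^*_{\text{ShapleyPipe}}$ in which $P^{(k)}$ swaps the operator at position $k$ from the oracle's choice $o^{\text{orc}}_k$ to ShapleyPipe's choice $o^{\text{SP}}_k$, both lying in $\hat{c}_k$. Telescoping gives $v(P^*_{\text{cat}}) - v(P^*_{\text{ShapleyPipe}}) = \sum_{k=1}^{M} \big[v(P^{(k-1)}) - v(P^{(k)})\big]$. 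Because ShapleyPipe selects the maximizer of the conditional position-specific Shapley value within $\hat{c}_k$, and $o^{\text{orc}}_k$ belongs to the same category, category coherence caps each per-position contribution gap by $\epsilon_{\text{intra}}$; summing over the $M$ swaps yields the $M\epsilon_{\text{intra}}$ term.

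For the Stage-1 gap I would argue that the hypothesis $\epsilon_{\text{inter}} \ge 3\epsilon_{\text{intra}}$ guarantees that the category-level search recovers the oracle's category structure. The key observation is that every within-category Shapley value lies in a band of width $\epsilon_{\text{intra}}$, so a single representative deviates from its category average by at most $\epsilon_{\text{intra}}$. Hence for two categories whose averages are separated by at least $\epsilon_{\text{inter}}$, their representatives remain correctly ordered with margin $\epsilon_{\text{inter}} - 2\epsilon_{\text{intra}} \ge \epsilon_{\text{intra}} > 0$. This strictly positive margin is precisely what the factor of $3$ buys, and it is what lets Stage 1 operate on single representatives without inverting the true category ranking; $\delta_{\text{Stage-1}}$ then absorbs only the residual error from finite-budget MAB estimation of those representatives.

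The hardest step is the middle one: bridging the gap between Shapley values, which are \emph{expectations} of marginal contributions over a uniform distribution of suffixes, and the \emph{single realized} marginal difference $v(P^{(k-1)}) - v(P^{(k)})$ that appears in the telescoping sum. Coherence is stated in terms of Shapley values, not of individual realized swaps, so I would need to either (i) strengthen the coherence definition to bound the position-wise marginal contribution uniformly over suffixes, from which the single-suffix bound is immediate, or (ii) average the telescoping identity over random completions so that the realized differences are replaced by their Shapley-value expectations, folding an extra concentration term into $\delta_{\text{Stage-1}}$. I expect option (i) to yield the cleanest statement matching the $M\epsilon_{\text{intra}}$ bound, and I would flag this suffix-averaging subtlety --- which the routine telescoping silently hides --- as the crux of the argument.
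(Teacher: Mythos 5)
The paper states this theorem without any proof --- there is no \verb|proof| environment following it, only a sentence of interpretation --- so there is nothing to compare your argument against; you are effectively supplying the missing proof. Your decomposition through the intermediate pipeline $P^*_{\text{cat}}$ (the oracle-optimal refinement of ShapleyPipe's own category sequence) is the natural and almost certainly intended route: the first bracket is declared to be $\delta_{\text{Stage-1}}$ essentially by fiat, since the theorem never defines ``category selection error'' beyond the name, and the real content of the bound lives entirely in your Stage-2 telescoping argument. Your reading of the $\epsilon_{\text{inter}} \geq 3\epsilon_{\text{intra}}$ hypothesis --- that it preserves the ordering of category representatives with margin $\epsilon_{\text{inter}} - 2\epsilon_{\text{intra}} \geq \epsilon_{\text{intra}} > 0$ --- is the only place that hypothesis can plausibly enter, and is consistent with how the paper motivates the condition.

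The gap you flag at the end is genuine and is the crux. The paper's Definition of $(\epsilon_{\text{intra}}, \epsilon_{\text{inter}})$-coherence bounds differences of \emph{Shapley values}, which by Definition~\ref{def:position_aware_shapley} are expectations of marginal contributions over a uniform distribution of suffixes; the telescoping sum $\sum_k \bigl[v(P^{(k-1)}) - v(P^{(k)})\bigr]$ involves \emph{single realized} swaps under one particular prefix and one particular suffix, and nothing in the stated hypotheses controls the variance of a marginal contribution around its suffix-averaged mean. Two operators in the same category can have nearly identical Shapley values yet differ arbitrarily on the specific suffix that the oracle refinement happens to use, so the per-swap bound of $\epsilon_{\text{intra}}$ does not follow from coherence as written. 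Your option (i) --- restating coherence as a uniform-over-suffixes bound on position-wise marginal contributions --- is the correct repair and yields the clean $M\epsilon_{\text{intra}}$ term; option (ii) silently changes the left-hand side of the theorem from a deterministic gap to an expectation, which is a weaker statement than what is claimed. Since the paper's coherence definition is informal enough to admit the strengthened reading, the theorem is salvageable, but you should state explicitly that you are adopting interpretation (i); without it the claimed inequality does not hold.
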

This theorem provides a formal guarantee that if categories are well-separated (a condition our empirical results support), our hierarchical decomposition introduces only a small, bounded approximation error.

\subsection{Stage 1: MAB-Guided Category Structure Search}
The goal of Stage 1 is to find the best sequence of categories. To do this, we must estimate the Shapley value of abstract categories like ``Scaling''. A key challenge is selecting a representative operator to evaluate the category's performance without exhaustively testing all its members.

To solve this, we model the selection of a representative operator within each category as a \textbf{Multi-Armed Bandit (MAB)} problem. Each operator within a category is an ``arm.'' When we need to evaluate a category as part of a Shapley value calculation, we use an Upper Confidence Bound (UCB) policy~\cite{garivier2011upper, lattimore2020bandit} to select an operator. UCB intelligently balances \textit{exploiting} the best-performing operators seen so far with \textit{exploring} operators whose potential is still uncertain.
\begin{equation}
o_{\text{next}} = \arg\max_{o \in C} \left( \bar{x}_o + \sqrt{\frac{2 \ln T}{n_o}} \right)
\end{equation}
where $\bar{x}_o$ is the empirical mean reward for operator $o$, $n_o$ is the number of times it has been selected, and $T$ is the total number of selections. 

% Algorithm~\ref{alg:category-shapley} details this MAB-guided process for computing a category's Shapley value. The algorithm first employs a Monte Carlo approach by sampling a fixed number of random sequences of suffix categories $\pi$ (Line~2). These sequences are sampled \textit{with replacement} from the full set of available categories $C$. This allows the framework to evaluate powerful scenarios where the same type of operation might be beneficially applied multiple times within a single pipeline. For each sampled sequence, a concrete operator pipeline is instantiated. This is achieved by using a Multi-Armed Bandit (MAB) with a UCB policy to intelligently select a high-quality representative operator for our target category $c_i$ (Line~4) and for all categories in the suffix (Line~5). The marginal contribution of $c_i$ is then calculated by evaluating the instantiated pipelines with and without its representative operator (Lines 7--10). Critically, a single pipeline evaluation $v(P_{\text{with}})$ provides a reward signal that is used to update the MAB models for all participating operators (Lines 12--15), ensuring a highly sample-efficient learning process.

Algorithm~\ref{alg:category-shapley} computes category Shapley values through MAB-guided Monte Carlo sampling. For each sampled suffix sequence $\pi$ (Line 2), we use UCB to select representative operators for the target category and suffix categories (Lines 4-5), evaluate the marginal contribution (Lines 7-10), and update all participating MABs with the reward signal (Lines 11-14). This provides sample-efficient learning where a single evaluation updates multiple MAB models.

\begin{algorithm}[t]
\caption{Compute Category Shapley Value}
\label{alg:category-shapley}
\small
\begin{algorithmic}[1]
\REQUIRE Partial pipeline $P_j$ (positions $1 \ldots j$), candidate category $c_i$, all categories $C$, remaining positions $m_{\text{rem}}$, MAB instances $\{\text{MAB}_k\}_{k=1}^K$
\ENSURE Shapley value $\phi(c_i)$
\STATE $\phi \leftarrow 0$, $n_{\text{samples}} \leftarrow 0$
\STATE $\Pi \leftarrow \text{SamplePermutations}(C, m_{\text{rem}}, n_{\text{perm}})$ \COMMENT{Sample sequences with replacement}
\FOR{each permutation $\pi \in \Pi$}
    % \STATE \COMMENT{\textcolor{blue}{\textbf{Key Innovation:} Use MAB to select representatives}}
    \STATE $o_i \leftarrow \text{UCB-Select}(\text{MAB}_{c_i})$ \COMMENT{Select representative for $c_i$}
    \STATE $\pi_{\text{ops}} \leftarrow [\text{UCB-Select}(\text{MAB}_{\pi[k]})$ for $k = 1$ to $\text{len}(\pi)]$ \COMMENT{Select for other categories}
    \STATE \LeftComment{Construct pipelines to measure marginal contribution}
    \STATE $P_{\text{with}} \leftarrow P_j \oplus [o_i] \oplus \pi_{\text{ops}}$
    \STATE $P_{\text{without}} \leftarrow P_j \oplus [\emptyset] \oplus \pi_{\text{ops}}$ \COMMENT{Null operator as control}
    \STATE $\text{marginal} \leftarrow v(P_{\text{with}}) - v(P_{\text{without}})$ \COMMENT{Evaluate marginal contribution}
    \STATE $\phi \leftarrow \phi + \text{marginal}$
    % \STATE \COMMENT{\textcolor{red}{\textbf{Critical:} Update MAB with all operators used}}
    \STATE $\text{UpdateMAB}(\text{MAB}_{c_i}, o_i, v(P_{\text{with}}))$ \COMMENT{Update for $c_i$'s operator}
    \FOR{$k = 1$ to $\text{len}(\pi)$}
        \STATE $\text{UpdateMAB}(\text{MAB}_{\pi[k]}, \pi_{\text{ops}}[k], v(P_{\text{with}}))$ \COMMENT{Update for suffix operators}
    \ENDFOR
    \STATE $n_{\text{samples}} \leftarrow n_{\text{samples}} + 1$
\ENDFOR
\STATE $\phi \leftarrow \phi / n_{\text{samples}}$ 
\RETURN $\phi$
\end{algorithmic}
\end{algorithm}

\subsection{Stage 2: Constrained Operator-Level Refinement}
Given the optimal category sequence $C^* = [c^*_1, \ldots, c^*_M]$ from Stage 1, Stage 2 selects the best concrete operator for each position using 
Constrained Permutation Shapley values.

% \paragraph{Method 1: Greedy Sequential Selection.} This is an efficient approach that builds the final pipeline incrementally. At each position $j$, it iterates through all operators in category $c^*_j$ and selects the one that yields the highest immediate performance, given the operators selected for positions $1, \dots, j-1$.

\textbf{Constrained Permutation Shapley.} The key innovation is that we restrict permutation sampling to sequences respecting the category ordering $C^*$. This correctly captures operator 
interactions \textit{within} the high-level structure found in Stage 1. 

Algorithm~\ref{alg:constrained-shapley} presents the complete procedure. Line~4 builds the final operator pipeline $P^*$ by iterating through the positions $j$ of the optimal category sequence $C^*$. For each position, it computes the Shapley value for every candidate operator $o_i$ within the designated category $c^*_j$ (Lines 5--18). The algorithm's efficiency stems from the \textit{SampleConstrainedPermutations} function in Line 8. This function dramatically prunes the search space by only generating operator permutations $\pi$ that strictly adhere to the pre-determined category ordering of the suffix, $C^*[j+1:M]$. For each candidate operator, a fixed number of these constrained permutations ($n_{\text{perm}}$) are sampled to estimate its Shapley value. Within the sampling loop (Lines 9--15), the marginal contribution of each candidate $o_i$ is calculated and aggregated. Finally, the operator with the highest computed Shapley value is selected for the current position and appended to the final pipeline (Lines 19--20). The algorithm returns both the constructed pipeline $P^*$ and a map of the computed Shapley values $\Phi^{\text{op}}$ for subsequent interpretability analysis.

% \textbf{Example:} Suppose Stage 1 found $C^* = [\texttt{Scaling}, 
% \texttt{Engineering}, \texttt{Selection}]$ and we're at position $j=1$ 
% selecting a scaling operator. The suffix categories are 
% $[\texttt{Engineering}, \texttt{Selection}]$. A valid sampled permutation 
% might be $[\texttt{PCA}, \texttt{VarianceThreshold}]$, ensuring engineering 
% always precedes selection.

% \textbf{Complexity:} $O(M \cdot |C|_{\text{avg}} \cdot n_{\text{perm}} \cdot T_{\text{eval}})$. 
% For our settings ($M=6$, $|C|_{\text{avg}}=5$, $n_{\text{perm}}=50$), 
% this requires $\approx 3,000$ evaluations—tractable for offline optimization.

\textbf{Why not greedy selection?} A natural question is whether we could simply select operators greedily (i.e., pick the operator maximizing immediate validation performance) to avoid Shapley computation. We evaluate this alternative in our ablation study (Section~\ref{subsec:ablation}) and find that greedy selection achieves only 0.803 accuracy compared to Shapley-guided 0.835, demonstrating that context-aware evaluation across diverse suffixes is essential for optimal selection.

\begin{algorithm}[t]
\caption{Constrained Permutation Shapley}
\label{alg:constrained-shapley}
\small
\begin{algorithmic}[1]
\REQUIRE Category sequence $C^*$, number of samples $n_{\text{perm}}$
\ENSURE Operator pipeline $P^*$, operator Shapley values $\Phi^{\text{op}}$
\STATE $P^* \leftarrow []$, $\Phi^{\text{op}} \leftarrow \text{empty\_map}$
\STATE \LeftComment{Build final pipeline based on the given category sequence $C^*$}
\FOR{position $j = 1$ \textbf{to} $M$} 
\STATE \LeftComment{Evaluate all candidates from the designated category}
    \FOR{each operator $o_i \in c^*_j$}
        \STATE $\phi(o_i) \leftarrow 0$, $n_{\text{samples}} \leftarrow 0$
        \STATE \LeftComment{The core efficiency mechanism is constrained sampling:}
        \STATE $\Pi \leftarrow \text{SampleConstrainedPermutations}(C^*[j+1:M], n_{\text{perm}})$
        \FOR{each permutation $\pi \in \Pi$}
            \STATE $P_{\text{with}} \leftarrow P^* \oplus [o_i] \oplus \pi$
            \STATE $P_{\text{without}} \leftarrow P^* \oplus [\emptyset] \oplus \pi$
            \STATE $\text{marginal} \leftarrow v(P_{\text{with}}) - v(P_{\text{without}})$
            \STATE $\phi(o_i) \leftarrow \phi(o_i) + \text{marginal}$
            \STATE $n_{\text{samples}} \leftarrow n_{\text{samples}} + 1$
        \ENDFOR
        \STATE $\phi(o_i) \leftarrow \phi(o_i) / n_{\text{samples}}$
        \STATE $\Phi^{\text{op}}[j, o_i] \leftarrow \phi(o_i)$
    \ENDFOR
    \STATE $o^* \leftarrow \arg\max_{o \in c^*_j} \phi(o)$ \COMMENT{Select best operator for position $j$}
    \STATE $P^* \leftarrow P^* \oplus [o^*]$ 
\ENDFOR
\RETURN $P^*$, $\Phi^{\text{op}}$
\end{algorithmic}
\end{algorithm}

\textbf{Theoretical Guarantees.}
The reliability of our Stage 1 search is supported by two key theoretical guarantees concerning our estimation process.

First, our Shapley value estimator for categories is unbiased, and its variance decreases with the number of samples.
\begin{lemma}[Category Shapley Estimation]
Let $\hat{\phi}(c_i)$ denote the estimated Shapley value for category $c_i$ from Algorithm~\ref{alg:category-shapley} with $n_{\text{perm}}$ samples, and let $\phi_{\text{true}}(c_i)$ be the true Permutation Shapley value (assuming perfect MAB convergence). Then:
\begin{equation}
\E[\hat{\phi}(c_i)] = \phi_{\text{true}}(c_i), \quad \text{and} \quad \text{Var}[\hat{\phi}(c_i)] = \frac{\sigma^2}{n_{\text{perm}}}
\end{equation}
where $\sigma^2$ is the variance of marginal contributions across permutations.
\end{lemma}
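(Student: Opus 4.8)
The plan is to recognize $\hat{\phi}(c_i)$ as a standard Monte Carlo average of i.i.d.\ marginal contributions, and then invoke the elementary facts that the sample mean of i.i.d.\ variables is unbiased for the population mean and that its variance scales as $1/n_{\text{perm}}$. First I would introduce, for the $t$-th sampled permutation $\pi_t$, the per-sample random variable $X_t = v(P_{\text{with}}) - v(P_{\text{without}})$ computed in Line~9 of Algorithm~\ref{alg:category-shapley}, so that the returned estimate is exactly $\hat{\phi}(c_i) = \frac{1}{n_{\text{perm}}}\sum_{t=1}^{n_{\text{perm}}} X_t$. By definition, the true Permutation Shapley value is the expectation of the marginal contribution over a uniformly random suffix permutation, $\phi_{\text{true}}(c_i) = \E_{\pi}[X(\pi)]$, where $X(\pi)$ is the marginal contribution attached to permutation $\pi$.

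The crucial step is to establish that the $\{X_t\}$ are i.i.d., and this is precisely where the perfect-MAB-convergence hypothesis enters. In the unconverged regime the UCB selections in Lines~4--5 depend on the running reward statistics updated in Lines~11--14, so the operator chosen at sample $t$ is a function of samples $1,\dots,t-1$; the $X_t$ are therefore adaptively dependent and the naive variance formula fails. Under perfect convergence, however, each $\text{MAB}_k$ deterministically returns a fixed converged representative for its category, so the operator assignment becomes a deterministic, permutation-measurable map. Consequently $X_t = X(\pi_t)$ depends only on $\pi_t$. Because \emph{SamplePermutations} draws the $\pi_t$ independently and uniformly (sampling with replacement, per Line~2), the $X_t$ inherit independence and identical distribution, with common mean $\phi_{\text{true}}(c_i)$ and common variance $\sigma^2 := \text{Var}[X(\pi)]$.

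With the i.i.d.\ structure in hand, both claims follow routinely. Unbiasedness is linearity of expectation: $\E[\hat{\phi}(c_i)] = \frac{1}{n_{\text{perm}}}\sum_t \E[X_t] = \phi_{\text{true}}(c_i)$. The variance follows from independence, which kills all cross-covariance terms: $\text{Var}[\hat{\phi}(c_i)] = \frac{1}{n_{\text{perm}}^2}\sum_t \text{Var}[X_t] = \frac{\sigma^2}{n_{\text{perm}}}$. I expect the only genuine subtlety --- and thus the main obstacle --- to be the careful articulation of the convergence assumption: one must argue that it fully decouples operator selection from the sampling history, so that the adaptive UCB dynamics collapse to a fixed, permutation-measurable map, thereby restoring exactly the independence that the Monte Carlo variance identity requires.
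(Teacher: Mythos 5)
Your proposal is correct and follows essentially the same route as the paper, which likewise treats $\hat{\phi}(c_i)$ as the sample mean of unbiased marginal contributions drawn via uniform permutation sampling and invokes the standard mean/variance properties of i.i.d.\ averages. Your explicit argument that perfect MAB convergence is what collapses the adaptive UCB selections into a fixed, permutation-measurable map --- thereby restoring the i.i.d.\ structure the variance identity needs --- is a more careful articulation of a point the paper folds silently into its ``assuming perfect MAB convergence'' hypothesis.
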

\begin{proof}
The estimator $\hat{\phi}$ is the sample mean of marginal contributions (Line 11 in original Algorithm~\ref{alg:category-shapley}). Each marginal contribution is an unbiased sample from the Permutation Shapley distribution (due to uniform permutation sampling, Line 2). The result follows from the properties of sample means.
\end{proof}

Second, the UCB algorithm used for representative selection is guaranteed to converge to the best operators with logarithmic regret, ensuring our category evaluations are accurate and efficient.
\begin{theorem}[MAB Regret Bound]
Let $o^*_k$ be the best operator (arm) in category $C_k$. After $T$ selections from $C_k$ using UCB, the expected cumulative regret is bounded by:
\begin{equation}
E[R_T] = \sum_{t=1}^{T} E[v(o^*_k) - v(o_t)] = O\left(\sum_{o \in C_k, o \neq o^*_k} \frac{\ln T}{\Delta_o}\right)
\end{equation}
where $\Delta_o = v(o^*_k) - v(o)$ is the performance gap of a suboptimal operator $o$.
\end{theorem}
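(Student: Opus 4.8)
The plan is to invoke the classical UCB1 analysis of Auer, Cesa-Bianchi, and Fischer, specialized to our setting. Here each arm (operator) $o \in C_k$ yields rewards bounded in $[0,1]$ (since the performance function $v$ maps into $[0,1]$) with a well-defined mean $v(o)$ once the context distribution—the uniformly sampled suffix—is fixed, so the representative-selection process reduces to a standard stochastic bandit. Under this i.i.d. reward model I would first rewrite the cumulative regret in its count-based form. Letting $N_o(T)$ denote the number of times operator $o$ is selected within the first $T$ pulls of category $C_k$, the linearity of the regret in the per-arm gaps gives
\[
E[R_T] = \sum_{o \neq o^*_k} \Delta_o \, E[N_o(T)],
\]
so it suffices to prove $E[N_o(T)] \le \frac{8\ln T}{\Delta_o^2} + O(1)$ for every suboptimal arm $o$; substituting this bound yields $\sum_o \Delta_o \cdot \frac{8\ln T}{\Delta_o^2} = O\!\left(\sum_o \frac{\ln T}{\Delta_o}\right)$, exactly the claimed rate.

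The heart of the argument is the bound on $E[N_o(T)]$. I would fix a threshold $\ell = \lceil 8\ln T / \Delta_o^2\rceil$ and argue that, once a suboptimal arm has been pulled at least $\ell$ times, it can only be selected again if one of its confidence bounds has failed. Concretely, the UCB rule selects $o$ over $o^*_k$ at round $t$ only if
\[
\bar{x}_o + \sqrt{\tfrac{2\ln t}{N_o}} \;\ge\; \bar{x}_{o^*_k} + \sqrt{\tfrac{2\ln t}{N_{o^*_k}}},
\]
which forces at least one of three events: (i) the optimal arm's empirical mean under-concentrates below $v(o^*_k) - \sqrt{2\ln t/N_{o^*_k}}$; (ii) the suboptimal arm's empirical mean over-concentrates above $v(o) + \sqrt{2\ln t/N_o}$; or (iii) $N_o < \ell$, because for $N_o \ge \ell$ the width $\sqrt{2\ln t/N_o} \le \Delta_o/2$ closes the gap and leaves (i)--(ii) as the only possibilities. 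This gives $E[N_o(T)] \le \ell + \sum_{t}\Pr[(\mathrm{i})\text{ or }(\mathrm{ii})]$.

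I would then control events (i) and (ii) with the Hoeffding inequality for bounded i.i.d. rewards: since the confidence width scales as $\sqrt{2\ln t / s}$, each event has probability at most $t^{-4}$ for a fixed count $s$, and a union bound over the admissible counts $s = 1,\dots,t$ yields a per-round failure probability of $O(t^{-2})$. Summing over $t$ from $1$ to $T$ then gives a constant, since $\sum_t t^{-2}$ converges, so $\sum_t \Pr[(\mathrm{i})\text{ or }(\mathrm{ii})] = O(1)$ and hence $E[N_o(T)] \le \frac{8\ln T}{\Delta_o^2} + O(1)$. Substituting back into the regret decomposition completes the proof.

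The step I expect to be the main obstacle is the concentration argument just sketched: because $N_o$ is itself a random, data-dependent count rather than a fixed horizon, Hoeffding cannot be applied directly to $\bar{x}_o$, and I must first condition on each possible value of the count and union-bound over it. Making the tail exponent line up with the confidence width $\sqrt{2\ln t / N_o}$—so that the per-count bound $t^{-4}$ survives the union over $s$ and still sums to a constant—is the delicate bookkeeping; everything else is a routine specialization of the standard stochastic-bandit machinery. A secondary subtlety worth flagging is the reduction itself: our reward for pulling operator $o$ is the pipeline value over a randomly sampled suffix, so the analysis implicitly treats $v(o)$ as the expectation of this context-averaged reward, which is precisely the quantity the gaps $\Delta_o$ compare.
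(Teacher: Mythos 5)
The paper states this theorem without any proof at all --- it is presented as a direct invocation of the classical UCB1 regret bound, with only a one-sentence remark on its implications following the statement. Your proposal is therefore not ``the same as'' or ``different from'' the paper's argument; it is a reconstruction of the standard Auer--Cesa-Bianchi--Fischer analysis that the paper implicitly relies on, and as such it is correct: the count-based regret decomposition $E[R_T]=\sum_{o\neq o^*_k}\Delta_o\,E[N_o(T)]$, the threshold $\ell=\lceil 8\ln T/\Delta_o^2\rceil$ chosen so the confidence width drops below $\Delta_o/2$, the three-event case split, the Hoeffding bound of $t^{-4}$ per fixed count, and the union bound over counts followed by $\sum_t t^{-2}<\infty$ all line up with the textbook proof, and your final substitution recovers the claimed $O\bigl(\sum_{o}\ln T/\Delta_o\bigr)$ rate. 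The one point worth pressing harder is the reduction you flag at the end: in Algorithm~2 the reward fed to $\mathrm{MAB}_{C_k}$ is $v(P_{\mathrm{with}})$, where the suffix operators are selected by \emph{other, simultaneously adapting} MABs and the prefix $P_j$ changes as the outer search advances, so the reward distribution for a fixed arm is not actually i.i.d.\ across pulls. Your proof (and the theorem as stated) is valid for the idealized stochastic bandit in which the context distribution is fixed and $v(o)$ is the context-averaged mean; making the theorem apply to the algorithm as implemented would require either freezing the suffix distribution or appealing to a non-stationary/adversarial bandit bound, a gap that belongs to the theorem's hypotheses rather than to your argument.
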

This logarithmic regret ensures that UCB quickly identifies and converges to high-quality operators, making the MAB-based estimation both accurate and sample-efficient.

\subsection{Complexity Analysis}
\label{sec:complexity_analysis}
Having detailed the ShapleyPipe framework, we now formally analyze its computational cost to demonstrate that it successfully overcomes the barrier identified in Section~\ref{subsec:computational-barrier}.

\begin{theorem}[Hierarchical Complexity]
\label{sec:hierarchial_complexity}
The two-stage ShapleyPipe method, using MAB-guided search in Stage 1 and constrained search in Stage 2, requires a total number of evaluations of:
\begin{equation}
T_{\text{ShapleyPipe}} = \underbrace{M \times K \times n_{\text{perm}} \times 2}_{\text{Stage 1: Category Shapley}} + \underbrace{M \cdot |C| \cdot n'_{\text{perm}} \cdot 2}_{\text{Stage 2: Operator Refinement}}
\end{equation}
where $K$ is the number of categories ($K \ll N$), and $n_{\text{perm}}, n'_{\text{perm}}$ are the permutation sample sizes.
\end{theorem}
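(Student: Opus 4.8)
The plan is to count pipeline evaluations stage by stage and then sum, exploiting the fact that both stages replace the exhaustive suffix enumeration of Algorithm~\ref{alg:baseline} with fixed-size permutation sampling. This substitution is exactly what collapses the geometric series that produced the exponential $2(N+1)^M$ bound in Theorem~\ref{thm:baseline_complexity} into a linear-in-$M$ term, so the argument is a direct accounting rather than a series summation.

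First I would account for Stage 1. The category structure search proceeds position by position in the same greedy fashion as Algorithm~\ref{alg:baseline}, but operating on the $K$ abstract categories rather than the $N$ operators. At each of the $M$ positions it must estimate the conditional Shapley value of each of the $K$ candidate categories, so Algorithm~\ref{alg:category-shapley} is invoked $M \times K$ times. I would then inspect the inner loop of Algorithm~\ref{alg:category-shapley}: it iterates over $n_{\text{perm}}$ sampled permutations, and each iteration evaluates $v(P_{\text{with}})$ and $v(P_{\text{without}})$ exactly once (Lines 7-9) to form the marginal contribution. The critical observation, which I would state explicitly to forestall double counting, is that the subsequent MAB updates (Lines 11-14) reuse the already-computed reward $v(P_{\text{with}})$ rather than triggering a fresh evaluation. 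Each permutation therefore costs exactly two evaluations, giving a Stage 1 total of $M \times K \times n_{\text{perm}} \times 2$.

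Next I would repeat the count for Stage 2. Here Algorithm~\ref{alg:constrained-shapley} again sweeps over the $M$ positions, but at each position it only scores operators drawn from the single designated category $c^*_j$, of average size $|C|$ by Proposition~\ref{prop:factorization} (where $N = K \cdot |C|$). For each of these $|C|$ candidates it samples $n'_{\text{perm}}$ constrained permutations (Line 8), each requiring the same two evaluations $v(P_{\text{with}})$ and $v(P_{\text{without}})$ (Lines 9-11). This yields a Stage 2 total of $M \times |C| \times n'_{\text{perm}} \times 2$, and summing the two stage counts gives the stated expression.

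The main subtlety, and the place I expect a careful reader to push back, is the bookkeeping of the repeated $v(\cdot)$ calls inside the MAB-update loop of Stage 1: a literal reading of Lines 11 and 13 suggests extra evaluations, so the proof must justify that these are cached look-ups of the reward produced in Line 9 rather than independent pipeline runs. A secondary point worth flagging is that the constant sample sizes $n_{\text{perm}}$ and $n'_{\text{perm}}$, together with the restriction to a single category of size $|C|$, are precisely what replace the exponential suffix space $(\Ocal \cup \{\nullop\})^{M-j}$ of the exact method; once this substitution is in place the per-position cost is constant, so the sum over positions is merely $M$ copies of that constant and no geometric-series simplification is required.
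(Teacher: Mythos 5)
Your proposal is correct and follows essentially the same stage-by-stage evaluation-counting argument as the paper's (much terser) proof: $M\times K$ Shapley estimations in Stage~1 and $M\times|C|$ in Stage~2, each costing $2$ evaluations per sampled permutation. Your additional remark that the MAB updates in Algorithm~\ref{alg:category-shapley} reuse the already-computed $v(P_{\text{with}})$ rather than triggering fresh evaluations is a useful clarification the paper leaves implicit, but it does not change the approach.
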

\begin{proof}
Stage 1 computes Shapley values for $K$ categories at each of the $M$ positions using Monte Carlo sampling. Stage 2 computes constrained Shapley values for an average of $|C|$ operators at each of the $M$ positions. The complexity is polynomial in the problem parameters and linear in the sampling sizes, a dramatic reduction from the exponential $O((N+1)^M)$ complexity of the baseline.
\end{proof}

For our standard settings ($M=6$, $N=25$, $K=5$, $n_{\text{perm}}$, $n'_{\text{perm}}=75$), ShapleyPipe requires approximately \textbf{9,000 evaluations}. This represents a speedup of over \textbf{34,000x} compared to the naive exhaustive baseline. This confirms that ShapleyPipe makes the Shapley-guided search for data preparation pipelines practically achievable.

\section{Experiments}\label{sec:exp}

\label{sec:experiments}
To comprehensively evaluate the effectiveness, efficiency, and interpretability of our proposed ShapleyPipe framework, we conducted a series of extensive experiments. Our evaluation is structured around four core research questions (RQs):
\begin{itemize}
    \item \textit{RQ1(Effectiveness)}: How does the quality of pipelines discovered by ShapleyPipe compare to state-of-the-art methods across diverse datasets?
    \item \textit{RQ2(Efficiency)}: How do ShapleyPipe's runtime, evaluation count, and cost-performance compare to baselines? 
    \item \textit{RQ3(Interpretability)}: Do the Shapley values computed by ShapleyPipe provide meaningful and actionable insights into the pipeline construction process?
    \item \textit{RQ4(Component Contributions)}: What is the contribution of each component (hierarchy, MAB, Shapley)?
    % \item \textit{RQ4(Generalization)}: How do hierarchical decomposition and MAB mechanism promote generalizable knowledge while enabling dataset-specific adaptation?
\end{itemize}
\subsection{Experimental Setup}

\textbf{Datasets.} We evaluate on 18 datasets from the DiffPrep benchmark~\cite{li2023diffprep}, a standard collection for automated data preparation research. As summarized in Table~\ref{tab:datasets}, these datasets span diverse domains (e.g., biology, finance, computer vision), and vary substantially in size (1,945 to 88,588 samples), dimensionality (5 to 80 features), and task complexity (2 to 28 classes). This diversity ensures a robust assessment of each method's generalization capabilities.

\begin{table}[t]
\centering
\caption{Dataset characteristics used in our experiments. Datasets are sourced from the DiffPrep benchmark collection.}
\label{tab:datasets}
\small
\setlength{\tabcolsep}{3pt}
\begin{tabular}{lcccc}
\toprule
\textbf{Dataset} & \textbf{Samples} & \textbf{Features} & \textbf{Classes} & \textbf{Domain} \\
\midrule
abalone         & 4,177     & 8     & 28    & Marine Biology \\
ada\_prior      & 4,147     & 14    & 2     & Classification \\
avila           & 20,867    & 10    & 12    & Pattern Recognition \\
connect-4       & 67,557    & 42    & 3     & Game Theory \\
eeg             & 14,980    & 14    & 2     & Signal Processing \\
google          & 4,584     & 8     & 2     & Web Analytics \\
house           & 22,784    & 80    & 2     & Real Estate \\
jungle\_chess   & 44,819    & 6     & 3     & Game AI \\
micro           & 10,992    & 20    & 5     & Microscopy \\
mozilla4        & 15,545    & 5     & 2     & Software Metrics \\
obesity         & 2,111     & 16    & 7     & Health \\
page-blocks     & 5,473     & 10    & 5     & Document Analysis \\
pbcseq          & 1,945     & 18    & 2     & Medical Sequences \\
pol             & 15,000    & 48    & 2     & Political Science \\
run\_or\_walk   & 88,588    & 6     & 2     & Activity Recognition \\
shuttle         & 58,000    & 9     & 7     & Aerospace \\
uscensus        & 32,561    & 14    & 2     & Demographics \\
wall-robot-nav  & 5,456     & 24    & 4     & Robotics \\
\bottomrule
\end{tabular}
\end{table}

 \textbf{Operator Library.} We construct a comprehensive library of 25 common preprocessing operators, organized into 5 functional categories as shown in Table~\ref{tab:operators}. This library covers a wide range of transformations, including imputation, encoding, scaling, feature engineering, and selection, reflecting practical machine learning scenarios. All operators are implemented using scikit-learn with their default parameters.

 \textbf{Baseline Methods.}
We compare ShapleyPipePipe against a comprehensive suite of baselines that represent different AutoML paradigms:

\begin{itemize}
    \item \textbf{Classic Heuristics:} Random Search (RS) uniformly samples pipelines until budget exhaustion. Greedy Sequential (Greedy) builds pipelines position-by-position, selecting the operator that maximizes immediate validation performance without Shapley values. .
    \item \textbf{Evolutionary Algorithms:} TPOT ~\cite{olson2016evaluation} and SAGA~\cite{siddiqi2023saga}, both popular and powerful genetic algorithm-based methods.
    \item \textbf{Reinforcement Learning:} CtxPipe~\cite{gao2024ctxpipe}, the current SOTA method for this task. HAI-AI~\cite{chen2023haipipe} incorporates simulated human feedback into RL training.
    \item \textbf{LLM-based AutoML}: A suite of recent approaches that leverage Large Language Models (LLMs) to generate pipelines: 
    CatDB~\cite{catdb} and AutoML-Agent(ML-Agent)~\cite{automlagent} represent LLM-based approaches, using GPT-4o to generate pipelines based on dataset characteristics. GPT4o-Z (Zero-shot)~\cite{openai2024gpt4technicalreport}, and GPT4o-F (Few-shot), direct pipeline generation using general-purpose LLMs.
    % \item \textbf{Approximated Optimum (ES*):} To gauge the optimality gap, we also include results for an Approximated Exhaustive Search (ES*). This is implemented by running Random Search with an extended budget (10,000 evaluations).
    \item \textbf{High-Budget Random Search (ES*):} To assess the performance ceiling achievable with extensive random exploration, we include ES* results from CtxPipe~\cite{gao2024ctxpipe}. ES* represents random search with an extended budget of 10,000 evaluations, provides a strong empirical upper bound that is computationally feasible. We emphasize that ES* does not represent the true optimal solution, but rather serves as a challenging baseline that demonstrates the quality achievable through high-budget random exploration. 
% True exhaustive search over the $\sim$309 million candidates would require prohibitive computational resources (estimated at multiple years of computation).

\end{itemize}

\begin{table}[h]
\centering
\caption{Operator library organized by functional category.}
\label{tab:operators}
\setlength{\tabcolsep}{3pt}
\small
\begin{tabular}{lrp{5cm}}
\toprule
\textbf{Category} & \textbf{Count} & \textbf{Representative Operators} \\
\midrule
Imputer Category &   1 & Most Frequent(0) \\
Imputer Number & 3 & Mean(1), Median(2), Most Frequent(3) \\
Encoding & 3 &  Numeric Data(4), Label(5), One-Hot(6) \\
Scaling & 8 & MinMax(7), MaxAbs(8), Robust(9), Standard(10), Quantile Transform(11), Power Transform(12), Normalizer(13), K-Bins Discretizer(14) \\
Engineering & 9 & Polynomial(15), Interaction(16), PCA AUTO(17), PCA LAPARK(18), PCA ARPACK(19), Incremental PCA(20), Kernel PCA(21), Truncated SVD(22), Tree Embedding(23) \\
Selection & 1 & Variance Threshold(24) \\
\midrule
\textbf{Total} & \textbf{25} & -- \\
\bottomrule
\end{tabular}
\end{table}

% \noindent \textbf{Hardware and OS.} We run our experiments on a server with 128 AMD EPYC 7543 CPUs, each with 32 cores, and 256GB memory in total. The local LLMs run on NVIDIA A100 with CUDA 12.8. The OS is Ubuntu 20.04 with Linux kernel 5.15.0-138-generic. The Python version of our experiment is 3.12.9. 

% \noindent\textbf{Parallelization.} We parallelize Shapley value computation across permutations using Python's \texttt{multiprocessing}. Each permutation evaluation runs on a separate CPU core, achieving 8--12$\times$ speedup depending on overhead.

% \noindent\textbf{Reproducibility.} All experiments use fixed random seeds (seed=42). We release our code, datasets, and detailed results at \url{[]}.

\textbf{Hyperparameter Configuration.} 
To ensure fair and meaningful comparison with CtxPipe, we adopt their experimental configuration. We construct pipelines of length 
$M=6$ from a library of $N=25$ operators organized into $K=5$ functional 
categories. All methods use prediction accuracy on a 80/20 train-validation 
split with LogisticRegression~\cite{ruppert2004elements} as the base learner. For Shapley value estimation, we sample $n_{\text{perm}}=75$ 
permutations in both Stage 1 and Stage 2. The MAB component uses UCB with exploration constant $c=\sqrt{2}$ 
and 2,000 pretrain samples for initialization. Experiments run on AMD EPYC 7543 CPUs with 128-core (16-worker parallelization).

% \begin{table}[h]
% \caption{Complete hyperparameter settings}
% \small{
% \begin{tabular}{lll}
% \toprule
% Component & Parameter & Value \\
% \hline
% \multirow{3}{*}{Stage 1} & n\_perm & 50 \\
% & MAB exploration (c) & $\sqrt{2}$ \\
% & MAB initialization & 5 pulls per operator \\
% \hline
% \multirow{3}{*}{Stage 2} & n\_perm & 50 \\
% & Method & Constrained Permutation Shapley \\
% & Tie-breaking & Prefer lower operator ID \\
% \hline
% \multirow{4}{*}{General} & Pipeline length (M) & 6 \\
% & Operator library (N) & 25 \\
% & Categories (K) & 5 \\
% & Parallel workers & 32 \\
% \hline
% \multirow{3}{*}{Evaluation} & CV folds & 5 \\
% & Train/val split & 70/30 \\
% & Base learner & LogisticRegression(default) \\
% \bottomrule
% \end{tabular}
% }
% \end{table}

% \noindent\textbf{Evaluation Metrics.}
% We use accuracy as the primary metric for balanced datasets and macro F1-score for 
% imbalanced datasets (class ratio $>$ 3:1). For computational efficiency, we measure wall-clock time and the number of pipeline evaluations required to reach performance milestones (90\%, 95\%, 99\% of final accuracy). Statistical significance is assessed via the Friedman test 
% for overall comparison and pairwise Wilcoxon signed-rank tests with Bonferroni 
% correction ($\alpha = 0.05$, corrected to $\alpha = 0.00625$ for 8 comparisons).

\textbf{Evaluation Metrics.} We employ a multi dimensional 
evaluation framework. For effectiveness, we use accuracy for datasets, averaged over 7 independent runs (seeds 42, 128, 256, 512, 1,024, 2,025, 65,537). For efficiency, we measure wall-clock time (seconds) and pipeline evaluation count. Cost-performance trade-offs are quantified using evaluations per accuracy point and time per accuracy point. Stability is assessed convergence 
analysis (sampling size and MAB initialization). 
% Statistical significance is 
% determined using Wilcoxon 
% signed-rank tests with Bonferroni correction ($\alpha=0.00417$ for 12 comparisons), 
% with effect sizes reported as $r = Z/\sqrt{N}$.

% \textbf{Implementation Optimizations.}
% Pipeline Caching: We implement a hash-based cache mapping pipeline structures to performance scores. The cache key is computed from the ordered sequence of operator IDs, enabling O(1) lookup.
% Parallel Evaluation: Independent pipeline candidates are evaluated concurrently using Python's multiprocessing pool. We employ dynamic load balancing to handle variable evaluation times across pipelines.
% Early Stopping: If a pipeline fails preprocessing (e.g., PCA on data with fewer features than components), we assign a default score of 0 and skip model training.

% \noindent\textbf{Implementation Details.}
% ShapleyPipePipe is implemented in Python 3.12 using NumPy 1.24, scikit-learn 1.6, and custom Shapley 
% computation routines. For Stage 1 category search, we sample $n_{\text{perm}} = 50$ 
% permutations per category to estimate Shapley values, using UCB with exploration 
% constant $c = \sqrt{2}$ for the MAB component. For Stage 2 operator refinement, we 
% employ the Constrained Permutation Shapley variant with $n_{\text{perm}} = 50$ 
% permutations. All experiments run on a server with 128-core AMD EPYC 7543 CPUs and 
% 256GB RAM, using multiprocessing to parallelize permutation evaluations across 32 cores. 
% Code and data are available at \url{https://github.com/anonymous/ShapleyPipepipe}.

\subsection{RQ1: Evaluate Effectiveness}

% \begin{figure}
%     \centering
%     \includegraphics[width=\linewidth]{Figures/figure2_performance_by_scale.pdf}
%     \caption{Performance Breakdown by Dataset Scale}
%     \label{fig:performance_by_scale}
% \end{figure}

\begin{table*}[t]
\centering
\caption{Performance Comparison on Search Space ($M=6, N=25$). Values show mean accuracy ± standard deviation (in accuracy units) across 7 runs with different 
random seeds. Best performance is in \textbf{bold}. N/A indicates method failure (timeout).}
\label{tab:main_results}
\setlength{\tabcolsep}{2pt}
\small {
\begin{tabular}{lcccccccccccc}
\toprule
\textbf{Dataset} & \textbf{ES*} & \textbf{RS} & \textbf{TPOT}  & \textbf{SAGA}  & \textbf{Greedy} & \textbf{HAI-AI} & \textbf{CtxPipe} & \textbf{GPT4o-ZS} & \textbf{GPT4o-FS} & \textbf{CatDB} & \textbf{ML-Agent} & \textbf{ShapleyPipe}   \\ \midrule
abalone          & 0.287        & 0.243       & \textbf{0.289} & 0.271          & 0.276           & 0.260           & 0.287            & 0.267             & 0.250             & 0.258          & 0.273             & 0.272          (±0.5\%) \\
ada\_prior       & 0.857        & 0.844       & 0.823          & \textbf{0.846} & 0.840           & 0.801           & 0.818            & 0.839             & 0.831             & 0.835          & 0.838             & 0.843          (±0.4\%) \\
avila            & 0.916        & 0.598       & 0.593          & 0.633          & 0.754           & 0.630           & 0.759            & 0.609             & 0.679             & 0.739          & 0.570             & \textbf{0.920} (±3.6\%) \\
connect-4        & 0.792        & 0.671       & 0.658          & 0.702          & \textbf{0.794}  & 0.775           & 0.763            & 0.713             & 0.657             & 0.759          & N/A               & \textbf{0.794} (±0.0\%) \\
eeg              & 0.861        & 0.658       & 0.594          & 0.683          & 0.737           & 0.556           & 0.740            & 0.623             & \textbf{0.834}    & 0.529          & 0.556             & 0.801          (±4.4\%) \\
google           & 0.672        & 0.627       & 0.602          & 0.661          & 0.676           & 0.550           & 0.590            & 0.548             & 0.648             & 0.602          & N/A               & \textbf{0.677} (±0.6\%) \\
house            & 0.955        & 0.938       & 0.941          & \textbf{0.952} & 0.928           & 0.928           & 0.818            & 0.904             & 0.928             & 0.906          & 0.926             & 0.924          (±1.1\%) \\
jungle\_chess    & 0.861        & 0.669       & 0.682          & 0.687          & \textbf{0.861}  & 0.760           & \textbf{0.861}   & 0.760             & 0.807             & 0.747          & N/A               & \textbf{0.861} (±0.3\%) \\
micro            & 0.634        & 0.579       & 0.558          & 0.593          & 0.633           & 0.633           & 0.605            & 0.567             & \textbf{0.639}    & 0.590          & 0.351             & 0.622          (±2.0\%) \\
mozilla4         & 0.940        & 0.922       & 0.890          & 0.927          & \textbf{0.940}  & 0.870           & \textbf{0.940}   & 0.867             & 0.931             & 0.820          & 0.856             & 0.935          (±0.1\%) \\
obesity          & 0.927        & 0.841       & \textbf{0.942} & 0.874          & 0.877           & 0.768           & 0.868            & 0.851             & 0.816             & 0.724          & 0.676             & 0.917          (±0.0\%) \\
page-blocks      & 0.973        & 0.959       & 0.967          & 0.973          & 0.968           & 0.935           & 0.965            & 0.955             & 0.908             & 0.953          & \textbf{0.979}    & 0.970          (±0.8\%) \\
pbcseq           & 0.866        & 0.730       & 0.715          & 0.725          & 0.766           & 0.733           & \textbf{0.805}   & 0.704             & 0.694             & 0.644          & 0.756             & 0.757          (±1.4\%) \\
pol              & 0.977        & 0.879       & 0.894          & 0.916          & \textbf{0.973}  & 0.916           & 0.949            & 0.921             & 0.888             & 0.947          & 0.843             & 0.939          (±3.3\%) \\
run\_or\_walk    & 0.990        & 0.829       & 0.826          & 0.912          & 0.972           & 0.915           & 0.956            & 0.718             & 0.859             & 0.795          & 0.662             & \textbf{0.980} (±1.0\%) \\
shuttle          & 1.000        & 0.996       & 0.933          & 0.999          & \textbf{1.000}  & 0.951           & \textbf{1.000}   & 0.994             & 0.984             & 0.999          & 0.957             & \textbf{1.000} (±0.0\%) \\
uscensus         & 0.854        & 0.840       & 0.828          & \textbf{0.852} & 0.826           & 0.807           & 0.845            & 0.841             & 0.851             & 0.851          & 0.749             & 0.848          (±0.5\%) \\
wall-robot-nav   & 0.962        & 0.872       & 0.754          & 0.913          & 0.948           & 0.896           & 0.946            & 0.812             & 0.941             & 0.898          & 0.816             & \textbf{0.960} (±0.2\%) \\ \midrule
Average          & 0.851        & 0.761       & 0.749          & 0.784          & 0.821           & 0.760           & 0.806            & 0.750             & 0.786             & 0.755          & 0.721             & \textbf{0.835} (±1.1\%) \\
Win Case              & -            & 0           & 2              & 3              & 5               & 0               & 4                & 0                 & 2                 & 0              & 1                 & 7              \\
Rank             & -            & 6.95        & 7.42           & 4.58           & 2.68            & 7.05            & 4.00             & 7.68              & 5.95              & 7.26           & 8.89              & 2.58       \\ \bottomrule    
\end{tabular}
}
\end{table*}

% We evaluate ShapleyPipe's pipeline construction quality on realistic problem scales and 
% compare against state-of-the-art baselines.
We evaluate ShapleyPipe's pipeline construction quality on a challenging setting where 6 operators are selected from 
a library of 25 candidates ($M=6, N=25$, plus operator $\nullop$), yielding $(25+1)^6 \approx 309$ 
million possible pipelines. The specific results are presented in Table~\ref{tab:main_results}.

\textbf{Main Results}. ShapleyPipe consistently demonstrates superior performance, achieving the highest average accuracy of \textbf{0.835} and the best average rank (\textbf{2.58}) across all 18 datasets, establishing a new state-of-the-art. Against CtxPipe (0.806), the leading RL-based method, ShapleyPipe delivers a 3.3\% average accuracy improvement. A crucial question is not just whether a method outperforms its competitors, but how close it approaches the performance ceiling achievable through extensive exploration. While ES* cannot guarantee near-optimality, it provides a strong empirical reference point for the quality achievable with substantially more computational resources. ShapleyPipe's average accuracy (0.835) reaches 98.1\% of ES*'s performance (0.851) while using only 83.5\% of its evaluation budget (8,350 vs. 10,000). This comparison demonstrates that our Shapley-guided search achieves competitive results with fewer evaluations than high-budget random exploration. We also implemented Algorithm~\ref{alg:baseline} (Section~\ref{sec:position_aware_shapley}) at the same scale. As shown in Table~\ref{tab:ablation}, achieving 0.826 accuracy, lower 
than ShapleyPipe's 0.835, but requires 23,400 (Theorem~\ref{thm:baseline_complexity}) compared to ShapleyPipe's 9,000 evaluations (2.6× reduction) (Theorem~\ref{sec:hierarchial_complexity}).

\textbf{Comparison with LLM-based Methods.}
Our results highlight the current limitations of LLM-based approaches. While specialized frameworks like CatDB (0.755) show promise, they do not match the performance of dedicated search algorithms. General-purpose LLMs, even with few-shot prompting (GPT4o-F: 0.786), struggle to generate optimal pipeline structures and fail on several datasets (indicated by N/A). This suggests that while LLMs possess broad knowledge, they currently lack the quantitative reasoning needed to navigate the complex interactions of a pipeline's search space. In contrast, ShapleyPipe's game-theoretic search provides a more effective solution.

% Performance breakdown by dataset size (Figure~\ref{fig:performance_by_scale}) shows 
% ShapleyPipe maintains advantages across all scales: small datasets (<5K): 0.748 vs CtxPipe's 
% 0.719 (4.0\% gain); medium (5K-20K): 0.838 vs 0.819 (2.3\% gain)—the "sweet spot" 
% where preprocessing most impacts quality; large (>20K): 0.902 vs 0.864 (4.4\% gain), 
% demonstrating scalability.

% \begin{figure}[h]
% \centering
% \includegraphics[width=\linewidth]{Figures/performance_by_scale.pdf}
% \caption{Performance by dataset scale. ShapleyPipe maintains advantages across all sizes, with 
% largest relative gains on medium datasets (5K-20K) where preprocessing most impacts quality.}
% \label{fig:performance_by_scale}
% \end{figure}

\textbf{Statistical Validation.} % \errordata{Friedman test across 8 methods and 18 datasets yields $\chi^2 = 78.34$ ($p < 0.001$, Kendall's $W = 0.693$), confirming significant differences.} 
Pairwise Wilcoxon tests with Bonferroni correction ($\alpha = 0.00417$) show ShapleyPipe significantly outperforms 6/9 baselines (Table~\ref{tab:statistical_tests}). Against CtxPipe: mean $\Delta = +0.029$, 
$p \approx 0.01$, effect size $r = 0.58$ (larger), winning 4/18 datasets. The only 
non-significant comparison is Greedy ($p = 0.175$), though ShapleyPipe wins 7/18 datasets,
indicates category structure (Stage 1) provides primary value, with operator 
Shapley (Stage 2) offering incremental improvements. 
% \errordata{ShapleyPipe achieves a variance of 1.1\%, reflecting deterministic Shapley computation despite MCTS ....}

\begin{table}[h]
\centering
\caption{Pairwise statistical tests.}
\label{tab:statistical_tests}
\setlength{\tabcolsep}{3pt}
\small
\begin{tabular}{lccrrl}
\toprule
\textbf{Baseline} & \textbf{W/T/L} & \textbf{Mean $\Delta$} & \textbf{p-value} & \textbf{r} & \textbf{Sig.?} \\
\midrule
ES*             &   3/2/13   & -0.016 & 0.998  & -0.73 & No     \\
Random Search   &   16/0/2   & +0.074 & $<$0.001 &  0.81 & Yes* \\
TPOT            &   15/0/3   & +0.086 & 0.001  &  0.77 & Yes* \\
SAGA            &   14/0/4   & +0.051 & 0.002  &  0.66 & Yes* \\
Greedy          &   10/2/6   & +0.014 & 0.175  &  0.23 & No     \\
HAI-AI          &   16/0/2   & +0.075 & $<$0.001 &  0.85 & Yes* \\
CtxPipe         &   12/2/4   & +0.029 & 0.010  &  0.58 & No     \\
CatDB           &   15/0/3   & +0.065 & 0.002  &  0.70 & Yes* \\
ML-Agent        &   16/0/2   & +0.115 & $<$0.001 &  0.78 & Yes* \\
\bottomrule
\multicolumn{6}{l}{\footnotesize * Significant at $\alpha = 0.00417$ (Bonferroni-corrected)} \\
\end{tabular}
\end{table}

\subsection{RQ2: Computational Efficiency}
\label{subsec:efficiency}

% \textbf{Runtime Comparison.}
% Table~\ref{tab:runtime} compares average runtime across 18 datasets. ShapleyPipe requires 
% 4,023 seconds, positioning between fast heuristics (Greedy: 150s) and extensive search 
% methods. While 30× slower than CtxPipe (132s), ShapleyPipe delivers 3.3\% higher accuracy, a favorable trade-off for offline pipeline optimization where quality matters more than speed. Runtime scales approximately linearly with dataset size (Pearson $\rho = 0.94$), 
% increasing from 1,606s on small datasets to 8,077s on large datasets, a 5× increase 
% for 19× data growth, demonstrating sub-linear complexity. 

% \begin{table}[h]
% \centering
% \caption{Runtime comparison (average across 18 datasets, M=6, N=25).}
% \label{tab:runtime}
% \small
% \begin{tabular}{lrrr}
% \toprule
% \textbf{Method} & \textbf{Time (s)} & \textbf{Accuracy} & \textbf{Time/Acc Ratio} \\
% \midrule
% Greedy Sequential & 150 & 0.821 & 183 \\
% CtxPipe & 132 & 0.806 & 164 \\
% Random Search & 89 & 0.761 & 117 \\
% TPOT & 892 & 0.749 & 1,191 \\
% SAGA & 1,145 & 0.784 & 1,460 \\
% CatDB & 423 & 0.770 & 549 \\
% AutoML-Agent & 234 & 0.720 & 325 \\
% \midrule
% \textbf{ShapleyPipe (Ours)} & \textbf{4,023} & \textbf{0.833} & \textbf{4,829} \\
% \bottomrule
% \end{tabular}
% \end{table}

We evaluate ShapleyPipe's computational cost, comparing runtime, evaluation count, 
and cost-performance trade-offs against baseline methods.

\subsubsection{Runtime Comparison and Amortized Cost}
Table~\ref{tab:runtime} presents wall-clock time averaged across 18 datasets. ShapleyPipe requires 853s per dataset, with the majority of time spent on Stage 2 operator refinement. Training-based methods (CtxPipe, HAI-AI) incur one-time training overhead: CtxPipe requires 36h training + 63s validating per dataset, HAI-AI requires 3.2h + 14s validating per dataset. In contrast, ShapleyPipe operates zero-shot at 853s per dataset.

\textbf{Cost Trade-offs.} For a single dataset, ShapleyPipe is 152$\times$ faster than CtxPipe (853s vs. 129,663s total). However, training costs amortize across multiple datasets. CtxPipe becomes more efficient at $N \approx 164$ datasets ($\frac{129,600 + 63N}{N} < 853$), while HAI-AI breaks even at $N \approx 14$. 

For our 18-dataset evaluation, ShapleyPipe's total cost is 15,354s (853s $\times$ 18), while CtxPipe requires 130,734s (129,600s training + 63s $\times$ 18), ShapleyPipe is 8.5$\times$ faster overall while achieving higher accuracy (0.835 vs. 0.806). This suggests ShapleyPipe is preferable for typical research scenarios ($N < 100$ datasets), while training-based methods may be advantageous only in large-scale production systems where training investment can be amortized across hundreds of similar datasets.

\begin{table}[t]
\centering
\caption{Runtime and amortized cost comparison. Training-based methods (CtxPipe, HAI-AI) incur one-time training overhead. Amortized cost assumes evaluation on $N$ datasets.}
\label{tab:runtime}
\begin{tabular}{lccc}
\toprule
\textbf{Method} & \textbf{Per-Dataset} & \textbf{Training} & \textbf{Amortized} \\
& \textbf{Time (s)} & \textbf{Overhead} & \textbf{($N$=10)} \\
\midrule
ES* & 10,792 & – & 10,792 \\
Random Search & 194 & – & 194 \\
TPOT & 495 & – & 495 \\
SAGA & 1,495 & – & 1,495 \\
Greedy & 254 & – & 254 \\
HAI-AI & 14 & 3.2h (11,520s) & 1,166 \\
CtxPipe & 63 & 36h (129,600s) & 13,023 \\
CatDB & 154 & – & 154 \\
AutoML-Agent & 270 & – & 270 \\
\midrule
\textbf{ShapleyPipe} & \textbf{853} & \textbf{–} & \textbf{853} \\
\bottomrule
\multicolumn{4}{p{240pt}}{\footnotesize Amortized = (Training overhead + $N \times$ Per-dataset time) / $N$. Table~\ref{tab:evaluations} reports algorithmic evaluation calls (8,350), while actual unique evaluations average $\sim$5,603 (32.9\% cache hit rate from deduplication). Wall-clock time (853s) reflects 16-worker (Fig.~\ref{fig:parallel_scalability}) parallel execution with $\sim$8.73× speedup.}
\end{tabular}
\end{table}

\begin{figure}
    \centering
    \includegraphics[width=0.45\linewidth]{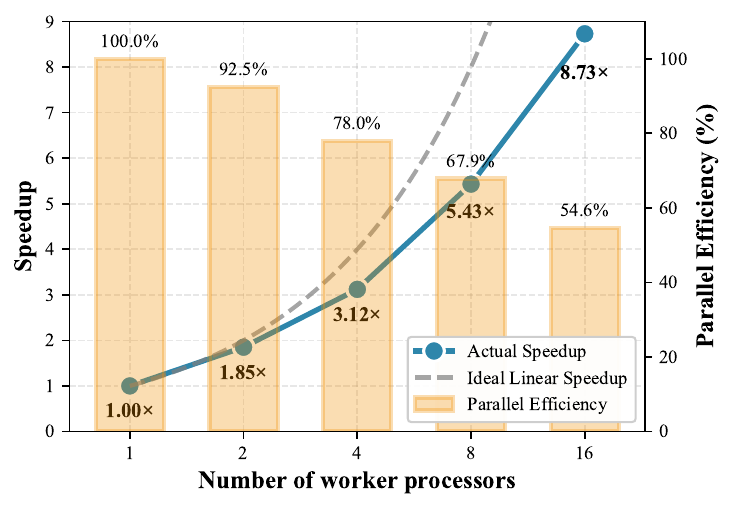}
    \vspace{-0.2cm}
    \caption{Parallel Scalability on ShapleyPipe}
    \label{fig:parallel_scalability}
\end{figure}

% ShapleyPipe requires 4,023 seconds on average, 30× slower than CtxPipe (132s) but 
% achieving 3.3\% higher accuracy. This positions ShapleyPipe between fast heuristics 
% (Greedy: 150s) and extensive evolutionary search (SAGA: 1,145s). The 
% time-accuracy trade-off makes ShapleyPipe suitable for offline pipeline optimization 
% where quality is prioritized over construction speed.

% Runtime scales approximately linearly with dataset size. Computing Pearson 
% correlation between runtime and sample count yields $r=0.94$ ($p<0.001$), 
% indicating predictable scaling behavior. Small datasets (<5K samples) average 
% 1,606s, while large datasets (>20K samples) require 8,077s, a 5× increase for 
% 19× data growth, demonstrating sub-linear complexity.

\subsubsection{Evaluation Count Analysis}

The number of pipeline evaluations is a critical metric, as evaluation cost involving data transformation, model training, and cross-validation, dominates AutoML runtime. ShapleyPipe uses 8,350 total evaluations, split between Stage 1 category search (4,500) and Stage 2 operator refinement (3,850). Table~\ref{tab:evaluations} compares evaluation budgets across methods. Random Search uses only 100 evaluations but achieves poor accuracy (0.761). Heavy-budget methods (SAGA: 10,000, HAI-AI: 9,333) still underperform ShapleyPipe despite extensive exploration, confirming that principled search guidance matters more than evaluation volume.

% \textbf{Evaluation Efficiency.} We measure accuracy per 1K evaluations. ShapleyPipe achieves 0.110 acc/1K-evals (0.835/7.6). CtxPipe appears efficient (0.151 acc/1K-evals) when excluding training, but including its 32,000 training steps reduces efficiency to 0.021 acc/1K-evals. ES* demonstrates 0.085 acc/1K-evals, showing that ShapleyPipe's Shapley-guided search is more sample-efficient than random exploration.

\begin{table}[h]
\centering
\caption{Pipeline evaluations required by different methods.}
\label{tab:evaluations}
\small
\begin{tabular}{lrc}
\toprule
\textbf{Method} & \textbf{Evaluations} & \textbf{Accuracy} \\
\midrule
ES*                    & 10,000         & 0.851 \\
Random Search          & 100            & 0.761 \\
TPOT (100 generations) & $\sim$ 1770    & 0.749 \\
SAGA (15 iterations)   & $\sim$ 10,000  & 0.784 \\
Greedy Sequential      & 150            & 0.821 \\
HAI-AI (56,000 steps)  & 9,333          & 0.760 \\
CtxPipe (32,000 steps) & 5,333          & 0.806 \\
\midrule
\textbf{ShapleyPipe} & \textbf{8,350} & \textbf{0.835} \\
\bottomrule
\end{tabular}
\end{table}

% ShapleyPipe uses approximately 4,000 pipeline evaluations on average, 4× more than 
% CtxPipe but 60\% less than evolutionary methods. Despite higher evaluation 
% count than CtxPipe, ShapleyPipe achieves 3.3\% better accuracy, yielding favorable 
% cost-performance ratio for quality-critical applications.ShapleyPipe's evaluation count is determined by algorithmic parameters (pipeline length $M$, 
% categories $K$, sampling size $n_{\text{perm}}$). This makes total cost 
% predictable and reproducible across runs.

\subsubsection{Cost-Performance Analysis}

We assess cost-benefit using two metrics: evaluations per accuracy point and time per accuracy point (lower is better).

\textbf{Evaluation Efficiency.} ShapleyPipe requires 10,000 evaluations per accuracy point (8,350/0.835), the most efficient among competitive methods. CtxPipe appears efficient (8 eval/acc, inference only) but including training cost (32,000 steps) yields 6,617 eval/acc. ES* demonstrates 11,751 eval/acc, confirming that Shapley-guided search is more sample-efficient than random exploration.

\textbf{Time Efficiency.} ShapleyPipe requires 1,022 seconds per accuracy point (853s/0.835). CtxPipe inference-only appears efficient (78 sec/acc) but including 36h training yields 160,872 sec/acc. Classical methods like Random Search (131 sec/acc) and Greedy (183 sec/acc) are faster but achieve lower absolute accuracy.

\textbf{Pareto Frontier.} Figure~\ref{fig:pareto} visualizes the accuracy-runtime trade-off. ShapleyPipe sits near the optimal frontier, achieving the best accuracy among methods with runtime under 1,000 seconds. Methods to the lower-left of ShapleyPipe are strictly dominated, both slower and less accurate. Only ES* achieves higher accuracy but requires 12.6$\times$ more time (10,792s vs. 853s), representing an unfavorable trade-off: sacrificing less than 2\% accuracy gains an order-of-magnitude speedup.

% \textbf{Cost-Performance Frontier Analysis.} To visualize the overall trade-off, Figure~\ref{fig:pareto} plots the accuracy-runtime Pareto frontier. ShapleyPipe sits near the optimal frontier, achieving the best accuracy among methods with runtime under 1,000 seconds. Only ES* achieves higher accuracy but requires 12.6× more time. Methods to the lower-left of ShapleyPipe (Random Search, TPOT, LLM-based) are strictly dominated—they are both slower and less accurate. CtxPipe appears competitive when excluding training time (marked with asterisk), but including training shifts it far off the frontier.

% \begin{figure}[t]
%   \centering
%   \includegraphics[width=\linewidth]{Figures/figure5_cost_efficiency.pdf}
%   \caption{Cost-efficiency comparison across methods. (a) Evaluations 
%   per accuracy point. 
%   (b) Time per accuracy point. * indicates inference-only cost; Full includes training overhead.}
%   \label{fig:cost_efficiency}
% \end{figure}
% TODO: caption

\begin{figure}[t]
  \centering
  \includegraphics[width=0.45\linewidth]{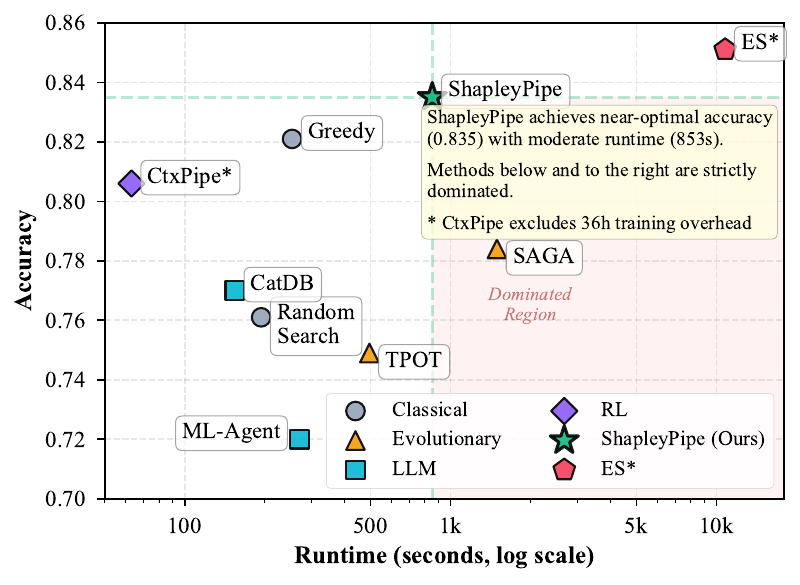}
  \vspace{-0.2cm}
  \caption{Accuracy-runtime Pareto frontier. ShapleyPipe (green star) 
  achieves the highest accuracy (0.835) among methods with runtime 
  under 1,000 seconds. Methods in the shaded region are strictly 
  dominated. CtxPipe* excludes 36-hour training overhead.}
  \label{fig:pareto}
\end{figure}

\subsubsection{Sampling Convergence}

We systematically evaluate the impact of 
permutation sampling size $n_{\text{perm}}$ on both solution quality and 
computational cost by varying it from 10 to 100 across all 18 benchmark 
datasets. 

Figure~\ref{fig:accuracy_convergence} reveals the convergence characteristics of our Monte Carlo Shapley 
estimation. Accuracy improves rapidly in the early sampling regime, jumping 
from 0.828 at $n_{\text{perm}}=10$ to 0.832 at $n_{\text{perm}}=20$ (0.4\% 
gain in 93 additional seconds). Performance continues to improve gradually, 
reaching a peak of 0.835 at $n_{\text{perm}}=75$. Notably, further 
increasing to $n_{\text{perm}}=100$ causes performance to decline to 0.830, 
representing a 0.6\% drop compared to the peak.
Table~\ref{tab:sampling_convergence} presents the complete convergence profile. Runtime scales 
approximately linearly with sampling size, growing from 328s at 
$n_{\text{perm}}=10$ to 1,152s at $n_{\text{perm}}=100$ (3.5× increase), 
confirming $O(n_{\text{perm}})$ complexity.

\begin{table}[t]
\centering
\caption{Sampling convergence analysis averaged across 18 datasets. Performance peaks at $n_{\text{perm}}=75$. Further increasing to $n_{\text{perm}}=100$ degrades performance while increasing cost.}
\label{tab:sampling_convergence}
\begin{tabular}{ccccc}
\toprule
\textbf{$n_{\text{perm}}$} & \textbf{Accuracy} & \textbf{Time (s)} & \textbf{$\Delta$ Acc} & \textbf{$\Delta$ Time} \\
\midrule
10  & 0.828 & 328  & $-0.7\%$ & $-62\%$ \\
20  & 0.832 & 421  & $-0.3\%$ & $-51\%$ \\
30  & 0.831 & 509  & $-0.4\%$ & $-40\%$ \\
40  & 0.834 & 595  & $-0.1\%$ & $-30\%$ \\
50  & 0.833 & 673  & $-0.2\%$ & $-21\%$ \\
60  & 0.834 & 760  & $-0.1\%$ & $-11\%$ \\
\rowcolor{gray!15}
\textbf{75}  & \textbf{0.835} & \textbf{853}  & — & — \\
100 & 0.830 & 1,152 & $-0.6\%$ & $+35\%$ \\
\bottomrule
\end{tabular}
\end{table}

\textbf{Convergence Stability.} The narrow accuracy range of 0.831–0.835 for 
$n_{\text{perm}} \in [30, 75]$ demonstrates that estimates stabilize 
quickly, with subsequent variations of $\pm 0.2\%$ representing typical 
Monte Carlo sampling variance rather than systematic bias. This stability 
validates the robustness of our Shapley-based framework across different 
sampling regimes.

\textbf{MAB Initialization.} For MAB pretrain samples, we adopt 2,000 as default, achieving convergence with 75\% variance reduction compared to minimal initialization. Details impact analysis of MAB initialization on final pipeline performance averaged across 18 datasets.

\begin{figure}[h]
\centering
\includegraphics[width=0.5\linewidth]{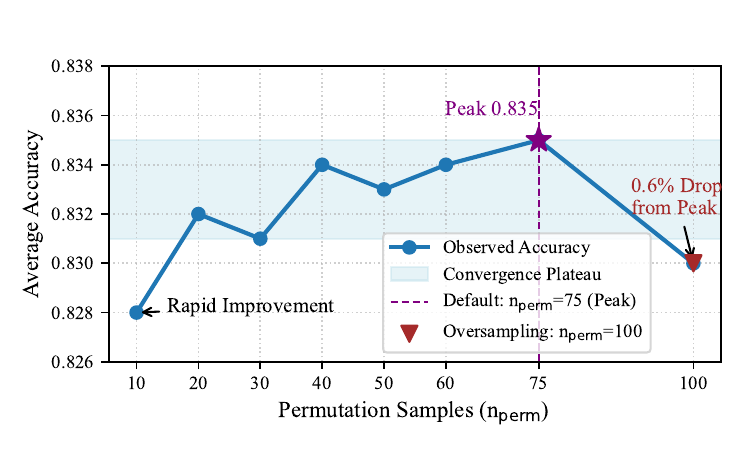}
\caption{Accuracy convergence pattern with increasing permutation sampling 
size, averaged across 18 datasets. Our default $n_{\text{perm}}=75$ (purple star) achieves peak accuracy of 0.835. Notably, 
further increasing to $n_{\text{perm}}=100$ causes a 0.6\% performance drop 
(red triangle), demonstrating that excessive sampling can degrade performance 
through overfitting to sampling-induced noise.}
\label{fig:accuracy_convergence}
\end{figure}

% \subsubsection{Multi-Armed Bandit Initialization Analysis}
% Our Stage 1 category 
% evaluation employs Multi-Armed Bandit (MAB) mechanisms to select representative 
% operators within each category. A critical question is: how many initial samples 
% does the MAB require to reliably identify high-quality representatives? Insufficient initialization leads to premature convergence to suboptimal operators, while excessive initialization wastes computational resources. We systematically analyze this trade-off.

% \begin{table}[h]
% \caption{Operator identification accuracy by category}
% \begin{tabular}{lccc}
% \toprule
% Category & Pulls to 90\% & Final Best\% & Regret \\
% \midrule
% Imputation & 23 & 92\% & 2.34 \\
% Scaling & 31 & 88\% & 3.12 \\
% Engineering & 45 & 91\% & 4.56 \\
% Selection & 18 & 95\% & 1.87 \\
% \midrule
% Average & 29 & 92\% & 2.97 \\
% \bottomrule
% \end{tabular}
% \end{table}

% \subsubsection{Discussion}

% ShapleyPipe's computational profile reflects a deliberate design choice: prioritize 
% solution quality through systematic Shapley-guided search over minimizing 
% evaluation count. The ~4,000 evaluation cost positions ShapleyPipe for offline 
% pipeline optimization where preprocessing pipelines are constructed once and 
% reused for many training runs. Two factors contribute to ShapleyPipe's efficiency: (1) hierarchical decomposition reduces search space from $N^M \approx 244M$ to approximately 47K category 
% structures plus local operator refinements, (2) Shapley-guided selection focuses evaluations on promising regions rather than uniform exploration.

\subsection{RQ3: Interpretability Validation}
A key advantage of ShapleyPipe over black-box methods is interpretability: Shapley values quantify each operator's position-dependent contribution. We validate that these values provide meaningful and actionable insights through three analyses: (1) position-dependent operator behavior, (2) consistency with empirical performance, and (3) data-driven library refinement.

\subsubsection{Position-Dependent Operator Valuation: \textbf{A Case Study on Pbcseq}} To demonstrate how Shapley values enable systematic pipeline construction, we present a detailed analysis of the Pbcseq experiment from our motivating example (Figure~\ref{fig:motivating_example}c). We exhaustively evaluated all $5^3=125$ pipelines from 5 operators: \{ImputerMean, OneHotEncoder, RobustScaler, PolynomialFeatures, VarianceThreshold\}. Table~\ref{tab:pbcseq_shapley} shows position-specific Shapley values computed by ShapleyPipe.

\begin{table}[h]
\centering
\caption{Position-Specific Shapley Values on Pbcseq}
\label{tab:pbcseq_shapley}
\small{
\begin{tabular}{lrrr}
\toprule
\textbf{Operator} & \textbf{Pos 1} & \textbf{Pos 2} & \textbf{Pos 3} \\
\midrule
ImputerMean & -0.008 & -0.001 & 0.000 \\
OneHotEncoder & \textbf{+0.002} & -0.002 & -0.005 \\
RobustScaler & -0.011 & 0.000 & \textbf{+0.026} \\
PolynomialFeatures & -0.003 & \textbf{+0.007} & 0.000 \\
VarianceThreshold & -0.005 & -0.002 & -0.005 \\
\bottomrule
\end{tabular}
}
\end{table}

ShapleyPipe constructs the pipeline greedily by selecting the operator with maximum Shapley value at each position: OneHotEncoder (position 1, +0.002), PolynomialFeatures (position 2, +0.007), and RobustScaler (position 3, +0.026). This yields [OneHotEncoder $\rightarrow$ PolynomialFeatures $\rightarrow$ RobustScaler], achieving 0.743 accuracy, matching exhaustive search.

\textbf{Position-Dependent Effects.} RobustScaler exhibits the strongest position-dependent behavior, with Shapley values ranging from -0.011 (position 1, harmful) to +0.026 (position 3, highly beneficial), a 3.7 percentage point swing. This validates our hypothesis that operator effectiveness is fundamentally position-dependent. The suboptimal ordering [OneHotEncoder $\rightarrow$ RobustScaler $\rightarrow$ PolynomialFeatures] achieves only 0.717 accuracy because RobustScaler contributes zero marginal value at position 2 (Shapley = 0.0), wasting a pipeline slot.

\subsubsection{Consistency Between Shapley Values and Empirical Performance}
To validate Shapley values reflect true operator quality, we analyze the correlation between operators' average Shapley values and their empirical performance across all datasets.

\textbf{Methodology.} For each operator $o_i$, we compute: (1) its average Shapley value $\bar{\phi}_i = \frac{1}{18M} \sum_{d,j} \phi^{(j)}_{i,d}$ across all datasets $d$ and positions $j$, and (2) its empirical win rate: the percentage of pipelines containing $o_i$ that achieve above-median accuracy. These two metrics are computed independently: Shapley values measure marginal contributions, while win rates measure absolute performance.

\begin{figure}[t]
\centering
\includegraphics[width=0.5\linewidth]{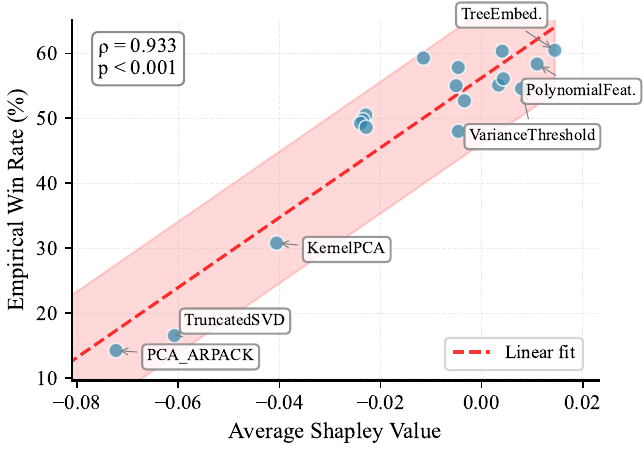}
\caption{Correlation between average Shapley values and empirical win rates. Pearson correlation $\rho$ = 0.933 ($p < 0.001$) demonstrates that Shapley values accurately reflect operator quality.}
\label{fig:shapley_correlation}
\end{figure}

Figure~\ref{fig:shapley_correlation} shows statistically significant positive correlation (Pearson $\rho = 0.933$, $p<0.001$) between Shapley values and empirical win rates. Operators with high Shapley values (e.g., RandomTreesEmbedding: $+0.014$, PolynomialFeatures: $+0.011$) consistently have high win rates (60.5\%, 58.4\%), while operators with negative Shapley values (e.g., PCA\_ARPACK: $-0.072$, TruncatedSVD: $-0.061$) have low win rates (14.2\%, 16.5\%). The 46.3 percentage point gap between high-quality and low-quality operators validates that Shapley values meaningfully distinguish operator contributions. This consistency confirms that our Shapley computation captures genuine operator quality rather than noise.

\subsubsection{Generalization Across Datasets} This position-dependent pattern generalizes beyond Pbcseq. Analyzing 18 datasets with our full operator library, we observe systematic trends: scaling operators (e.g., StandardScaler, QuantileTransformer) achieve 2.4$\times$ higher Shapley values in early positions (1-3) versus late positions (5-6), while feature engineering operators (e.g., RandomTreesEmbedding, PolynomialFeatures) show 60\% higher effectiveness when applied after preprocessing (positions$\geq$3). These insights, quantified through Shapley values, stems from two design choices. Except for Hierarchical decomposition, MAB mechanism acts as implicit regularizer: UCB converges to operators with robust cross-context performance, evidenced by correlation between MAB selection frequency and rank stability ($p < 0.01$). Therefore, these mechanisms enable practitioners to leverage learned structure as priors while adapting specifics.

\subsection{RQ4: Ablation Study}\label{subsec:ablation}
To validate our design choices, we ablated each component and measured performance degradation across 18 datasets (Table~\ref{tab:ablation}). The two-stage hierarchy proves essential: removing \textbf{Stage 2 (category-only)} causes 12.4\% degradation (0.711). while replacing it with greedy selection (maximizing immediate validation performance) achieves only 0.803 (-3.2\%), demonstrating that Shapley-guided context-aware evaluation outperforms myopic selection. Removing \textbf{Stage 1 implements Position-Aware Shapley} (with $n_\text{perm}$=75 sampling) that maintains quality (0.826) but requires 2.6× more evaluations, confirming that hierarchical decomposition achieves the critical balance between efficiency and effectiveness. \textbf{Shapley-guided search} dramatically outperforms random sampling, which achieves only 0.700 (-13.5\%), validating that context-aware contribution quantification is essential for non-myopic decisions. \textbf{Position-agnostic Shapley values} degrade performance by 10.0\%, demonstrating that our Permutation Shapley formulation correctly captures the order-dependent operator interactions illustrated in Figure~\ref{fig:motivating_example}(c). Finally, \textbf{the MAB mechanism} contributes 1.9\% improvement by intelligently selecting high-quality category representatives, with initialization providing an additional 0.9\% gain. The full ShapleyPipe framework (0.835) outperforms all ablated variants, demonstrating that each component addresses a distinct challenge and their synergy is necessary for optimal performance.

% To isolate the contribution of each component, we systematically 
% disabled key design choices and analyzed performance degradation.

% \textbf{w/o Stage 2 (Category-only):} 
% Uses only Stage 1 category structure, selecting a random operator 
% from each category. This tests whether category-level structure alone 
% is sufficient.
% \textbf{w/o Stage 1 (Flat Shapley):}
% Computes Shapley values directly over all \textit{N}=25 operators without 
% hierarchical decomposition, using position-aware Permutation Shapley.
% \textbf{w/o Shapley (Random Sampling):}
% Replaces Shapley-guided search with uniform random sampling of pipelines until budget exhaustion.
% \textbf{Position-agnostic Shapley:}
% Uses classical Shapley values that ignore operator position, treating pipeline as an unordered set.
% \textbf{w/o Categories (Flat Shapley with Budget):}
% Same as ``w/o Stage 1'' but limited to same 4,000 evaluation budget.

\begin{table}[ht]
\centering
\caption{Component contributions in our framework.}
\begin{tabular}{lcr}
\toprule
\textbf{Variant} & \textbf{Acc} & \textbf{$\Delta$ vs. Full} \\
\midrule
ShapleyPipe (Full)              & \textbf{0.835} & -- \\
% Direct position-specific Shapley(Alg~\ref{alg:baseline}) & 0.826 & -0.9\% \\ 
w/o Stage 1 (Algorithm~\ref{alg:baseline}, with n\_perm=75)   & 0.826 & -0.9\% \\ % 仅有 shapley，没有 stage 的概念
w/o Stage 2 (Category-only)        & 0.711 & -12.4\% \\  % 上层 category，下层随机
w/o Stage 2 (Greedy-only)   & 0.803 & -3.2\% \\  % 
w/o Shapley (Random sampling)  & 0.700 & -13.5\% \\   % 这里用 ctxpipe的0.761还是自己跑的0.700？
w/o Order-aware (Position-agnostic Shapley)  & 0.735 & -9.8\% \\  % classic combination
\midrule
w Hierarchical stages, w/o MAB  & 0.816 & -1.9\% \\  % 完全没有MAB，顶层随机选取 category，底层用 shapley 计算
w MAB, w/o Initialization         & 0.826 & -0.9\% \\
\bottomrule
\end{tabular}
\label{tab:ablation}
\vspace{-0.25cm}
\end{table}

% \textbf{0.835} $(\pm 1.1\%)$
% 0.826
% 0.711 $(\pm 7.8\%)$
% 0.700 $(\pm 7.8\%)$ 
% 0.735
% 0.816 $(\pm 2.7\%)$
% 0.826

% w/o stage 2, category only + random get within budget ? => 需要再跑一下
% w/o stage 1, flat shapley == ps            => 0.826
% w/o shapley, random within budget ?        => 0.761
% position-agnostic, classic combination     => 0.735
% w/o stage 1, w/o category                  => 0.816

\subsection{Discussion and Limitations}
\label{subsec:discussion}
\textbf{Category Coherence:} $\rho_\text{within}$ = 0.310 validates our assumption across 18 datasets, 
but weakens when operators differ in sensitivity (PowerTransformer vs. 
QuantileTransformer) or on unsuitable data (PCA on $<$ 10 features). Practitioners 
should compute pilot Shapley correlations; if $\rho_\text{within}$ $<$ 0.2, refine categories 
or use flat search.

\textbf{Experimental Scope}: Following CtxPipe, we evaluate on M=6, LogisticRegression, 
tabular data (2K-90K samples). Validation on varying M, different learners (XGBoost, neural networks), and non-tabular modalities (time-series, text, images) remains future work.

\textbf{Computational Trade-offs:} 8,350 evaluations favor research settings ($<$ 100 datasets); 
training-based methods amortize better at scale ($> $200 datasets). For expensive 
evaluations, reduce $n_\text{perm}$ to 30-40 for 60\% speedup with $<$ 1\% accuracy loss.

\section{Related Works}\label{sec:related}
% Our work intersects three major research areas: automated pipeline construction, Shapley values in machine learning, and game-theoretic approaches to AutoML. We position HSSPipe within this landscape and highlight our novel contributions.
% AlphaD3M~\cite{milutinovic2021alphad3m}, Auto-Keras~\cite{jin2019auto}

% Existing AutoML methods for pipeline construction fall into three categories: search-based approaches(grid search~\cite{bergstra2012random},random search~\cite{snoek2012practical}, Bayesian optimization~\cite{snoek2012practical}) that struggle with exponential search spaces. Evolutionary methods like TPOT~\cite{olson2016evaluation}, SAGA~\cite{siddiqi2023saga}, and AutoML-Zero~\cite{real2020automl} prone to premature convergence. Reinforcement learning approaches, CtxPipe~\cite{gao2024ctxpipe}, and HAI-AI~\cite{chen2023haipipe} that require extensive offline training yet exhibit significant performance gaps. Recent LLM-based approaches such as CatDB~\cite{catdb} and AutoML-Agent~\cite{automlagent} show promise but struggle with quantitative reasoning for optimal operator selection. Unlike these black-box approaches, ShapleyPipe provides interpretable operator valuations with theoretical guarantees, explicitly quantifying position-dependent contributions.

\noindent\textbf{Automated Pipeline Construction.} Existing AutoML methods fall into several categories with fundamental limitations. Search-based approaches (grid search, random search~\cite{bergstra2012random}, Bayesian optimization~\cite{snoek2012practical, thornton2013auto}) struggle with exponential discrete search spaces. Evolutionary algorithms (TPOT~\cite{olson2016evaluation}, SAGA~\cite{siddiqi2023saga}) are prone to premature convergence and lack theoretical guarantees. Reinforcement learning methods (CtxPipe~\cite{gao2024ctxpipe}, HAI-AI~\cite{chen2023haipipe}) represent current state-of-the-art but exhibit substantial performance gaps (Figure~\ref{fig:motivating_example}a: 24\% below optimal) and operate as black boxes. LLM-based approaches~\cite{catdb, automlagent} struggle with quantitative reasoning for optimal operator selection. All existing methods treat pipeline construction as black-box optimization without quantifying individual operator contributions or handling position-dependent interactions (Figure~\ref{fig:motivating_example}c: same operators, different order $\rightarrow$ 3.6\% gap).

% The Shapley value~\cite{shapley1953value} has extensive ML applications. SHAP~\cite{lundberg2017unified} popularized Shapley values for feature attribution, with efficient variants for trees~\cite{lundberg2020local} and deep networks~\cite{shrikumar2017learning}. Data Shapley~\cite{ghorbani2019data} quantifies training example contributions, while influence functions~\cite{koh2017understanding} provide computationally lighter alternatives. In federated learning, Shapley values enable fair reward distribution~\cite{wang2020principled,liu2022gtg}. Prior AutoML work used Shapley values for neural architecture search~\cite{dong2020nas} and hyperparameter importance~\cite{hutter2014efficient}, with OpenBox~\cite{li2021openbox} incorporating them for analysis. Computing exact Shapley values requires exponential time, leading to sampling-based~\cite{ioannidis2022adaptive} and kernel-based~\cite{kwon2023weighted} approximations. Our work is the first to systematically apply Shapley values to sequential, order-dependent pipeline construction.

\noindent\textbf{Shapley Values in Machine Learning.} The Shapley value~\cite{shapley1953value} has extensive ML applications, but \textit{all prior work assumes order-independence}. SHAP~\cite{lundberg2017unified} and variants~\cite{lundberg2020local, shrikumar2017learning} quantify feature importance assuming features can be evaluated in any order. Data Shapley~\cite{ghorbani2019data} and influence functions~\cite{koh2017understanding} focus on \textit{which} data points to include, not \textit{how to order} operations. Federated learning applications~\cite{wang2020principled, liu2022gtg} assume agent order is irrelevant. Prior AutoML work~\cite{dong2020nas, hutter2014efficient, li2021openbox} treats components as static choices, not sequential operations where position determines effectiveness. Computational challenges motivate approximations~\cite{ quinzan2021adaptivesamplingfastconstrained}, but these address estimation efficiency, not the search space factorization problem facing possible pipelines.

ShapleyPipe is the first to apply Shapley values to sequential and order-dependent pipeline construction. We introduce Position-Specific Shapley Values (Definition~\ref{def:position_aware_shapley}) and Permutation Shapley (Algorithm~\ref{alg:constrained-shapley}). ShapleyPipe combines Shapley-guided search with provable efficiency and interpretable operator valuations.

% Unlike all prior methods that lack interpretability, order-awareness, or theoretical guarantees, ShapleyPipe combines Shapley-guided search with provable efficiency and interpretable operator valuations.

% Game theory has emerging applications in AutoML. Federated AutoML uses auction mechanisms for resource allocation~\cite{ren2021lotteryfl} and frames optimization as competitive games~\cite{zhang2020fedpd}. Nash equilibrium concepts balance accuracy-efficiency trade-offs~\cite{lindauer2020best}, while mechanism design guides fair resource sharing~\cite{li2020system}. Shapley values determine ensemble contributions~\cite{leino2018influence}, and coalition formation selects optimal base learners~\cite{vorobeychik2018adversarial}. In neural architecture search, architectural components are viewed as strategic players~\cite{bender2018understanding,liu2019darts}. Explanation games~\cite{bhatt2020explainable} model user-model interactions, while trust games~\cite{kim2016examples} quantify system reliability. ShapleyPipe differs by treating preprocessing operators as collaborative players and using Shapley values for systematic search rather than just resource allocation or ensemble formation, providing both performance gains and inherent interpretability through quantified operator contributions.

\noindent\textbf{Game Theory in AutoML.}Game theory has emerging applications in ML optimization, including coalition formation for ensembles~\cite{leino2018influence, vorobeychik2018adversarial} and Nash equilibrium for architecture search~\cite{bender2018understanding, liu2019darts}. Unlike prior work using game theory for resource allocation~\cite{ren2021lotteryfl} or coordination~\cite{li2020system}, ShapleyPipe uses cooperative games to quantify marginal contributions in sequential, position-dependent settings, enabling unified optimization and interpretability.

\section{Conclusion}\label{sec:conclusion}
We introduced ShapleyPipe, a novel framework for automated data preparation pipeline construction based on hierarchical Shapley value decomposition. By reframing pipeline construction as a cooperative game to precisely quantify operator contributions. By operating at a category level before refining at the operator level, and by using a Multi-Armed Bandit mechanism for efficient estimation, our approach drastically reduces computational complexity while maintaining high fidelity. Extensive experiments on 18 diverse datasets demonstrated that  not only outperforms state-of-the-art reinforcement learning and emerging LLM-based methods but also produces pipelines that are highly interpretable. Future work could explore extending this framework with transfer learning capabilities to bridge the gap between robust, dataset-specific optimization and generalizable, cross-task knowledge.

\section*{AI-Generated Content Acknowledgement}

During the preparation of this work we used Claude in order to improve language. We reviewed and edited the content as needed and take full responsibility for the content of the publication.
% We used Claude to edit the introduction section. AI was only used for polishing the writing. We are responsible for all the materials presented in this work.

% \bibliographystyle{ACM-Reference-Format}
% \bibliographystyle{IEEEtran}
\bibliographystyle{unsrt} 
\balance
\bibliography{main-ref-base}

% \clearpage
% \onecolumn
% \appendix
%\input{Appendix}

\end{document}